\def\UseBibLatex{1}
\def\input@path{{styles/}}
\providecommand{\BibLatexMode}[1]{}
\providecommand{\BibTexMode}[1]{}
\renewcommand{\BibLatexMode}[1]{#1}
\renewcommand{\BibTexMode}[1]{}
  \renewcommand{\BibLatexMode}[1]{}
  \renewcommand{\BibTexMode}[1]{#1}
\theoremstyle{plain}%
\newtheorem{theorem}{Theorem}[section]
\newtheorem{lemma}[theorem]{Lemma}
\theoremstyle{plain}%
\newtheorem*{remark:unnumbered}[theorem]{Remark}%
\newtheorem{defn}[theorem]{Definition}
\theoremstyle{nonumberplain}%
\newtheorem{proof}{Proof:}%
\providecommand{\emphind}[1]{}%
\renewcommand{\emphind}[1]{\emph{#1}\index{#1}}
\definecolor{blue25emph}{rgb}{0, 0, 11}
\providecommand{\emphic}[2]{}
\renewcommand{\emphic}[2]{\textcolor{blue25emph}{%
      \textbf{\emph{#1}}}\index{#2}}
\providecommand{\emphi}[1]{}%
\renewcommand{\emphi}[1]{\emphic{#1}{#1}}
\definecolor{almostblack}{rgb}{0, 0, 0.3}
\providecommand{\emphw}[1]{}%
\renewcommand{\emphw}[1]{{\textcolor{almostblack}{\emph{#1}}}}%
\providecommand{\emphOnly}[1]{}%
\renewcommand{\emphOnly}[1]{\emph{\textcolor{blue25emph}{\textbf{#1}}}}
\newcommand{\myqedsymbol}{\rule{2mm}{2mm}}
\newcommand{\HLink}[2]{\hyperref[#2]{#1~\ref*{#2}}}
\newcommand{\HLinkSuffix}[3]{\hyperref[#2]{#1\ref*{#2}{#3}}}
\newcommand{\figlab}[1]{\label{fig:#1}}
\newcommand{\figref}[1]{\HLink{Figure}{fig:#1}}
\newcommand{\thmlab}[1]{{\label{theo:#1}}}
\newcommand{\thmref}[1]{\HLink{Theorem}{theo:#1}}
\newcommand{\lemlab}[1]{\label{lemma:#1}}
\newcommand{\lemref}[1]{\HLink{Lemma}{lemma:#1}}%
\providecommand{\eqlab}[1]{}%
\renewcommand{\eqlab}[1]{\label{equation:#1}}
\providecommand{\remove}[1]{}%
\newcommand{\Set}[2]{\left\{ #1 \;\middle\vert\; #2 \right\}}
\newcommand{\pth}[1]{\mleft(#1\mright)}%
\newcommand{\ceil}[1]{\mleft\lceil {#1} \mright\rceil}
\newcommand{\floor}[1]{\mleft\lfloor {#1} \mright\rfloor}
\renewcommand{\th}{th\xspace}
\newcommand{\T}{{\mathcal{T}}}
\newcommand{\G}{{\mathsf{G}}}
\newcommand{\ZZ}{\mathbb{Z}}%
\renewcommand{\Re}{\mathbb{R}}%
\newlist{compactenumA}{enumerate}{5}%
\setlist[compactenumA]{topsep=0pt,itemsep=-1ex,partopsep=1ex,parsep=1ex,%
   label=(\Alph*)}%
\newlist{compactenuma}{enumerate}{5}%
\setlist[compactenuma]{topsep=0pt,itemsep=-1ex,partopsep=1ex,parsep=1ex,%
   label=(\alph*)}%
\newlist{compactenumI}{enumerate}{5}%
\setlist[compactenumI]{topsep=0pt,itemsep=-1ex,partopsep=1ex,parsep=1ex,%
   label=(\Roman*)}%
\newlist{compactenumi}{enumerate}{5}%
\setlist[compactenumi]{topsep=0pt,itemsep=-1ex,partopsep=1ex,parsep=1ex,%
   label=(\roman*)}%
\newlist{compactitem}{itemize}{5}%
\setlist[compactitem]{topsep=0pt,itemsep=-1ex,partopsep=1ex,parsep=1ex,%
   label=\ensuremath{\bullet}}%
\numberwithin{figure}{section}%
\numberwithin{table}{section}%
\numberwithin{equation}{section}%
\newcommand{\Area}{{\mathrm{Area}}}
\newcommand{\Ropt}{R_{\mbox{\scriptsize \rm opt}}}
\newcommand{\Grid}{{\mathrm{Grid}}}
\newcommand{\Diam}{{\cal D}}
\def\SCH{{\cal S}}
\def\SCHX{{{\cal S}^{\oplus}}}
\newcommand{\Width}{\mathrm{Width}}
\def\qed{\hfill{\hfill\rule{2mm}{2mm}}}
\newcommand{\approxMVBB}{{\sc GridSearchMinVolBbx}}
\newcommand{\BG}{{\mathsf{BG}}}
\newcommand{\Od}{{\small\bf od}\ }
\def\M{{\mathsf{M}}}
\newcommand{\Bopt}{B_{\mbox{\scriptsize \rm opt}}}
\newcommand{\eps}{\varepsilon}
\newcommand{\C}{\mathcal{C}}
\newcommand{\B}{\mathcal{B}}
\newcommand{\A}{\mathcal{A}}
\newcommand{\V}{\mathcal{V}}
\newcommand{\CH}{{\mathcal{CH}}}
\newcommand{\Vol}{\mathrm{Vol}}
\newcommand{\seclab}[1]{\label{sec:#1}}
\newcommand{\secref}[1]{\HLink{Section}{sec:#1}}
\newcommand{\tbllab}[1]{\label{table:#1}}
\newcommand{\tblref}[1]{\HLink{Table}{table:#1}}
\begin{document}
\title{Efficiently Approximating the Minimum-Volume Bounding Box of a Point Set in Three Dimensions%
   \thanks{%
      Work on this paper by the first author was done while he was affiliated with the Center for Geometric Computing, Dept.\ of Computer Science, Johns Hopkins University, Baltimore, MD 21218.  A preliminary version of this paper appeared in \cite{bh-eamvb-99-conf}, and the journal version appeared in \cite{bh-eamvb-01} Work of the first author has been supported by the U.S.  Army Research Office under Grant DAAH04-96-1-0013.  Work of the second author has been supported by a grant from the U.S.-Israeli Binational Science Foundation.  }  }

\author{%
   Gill Barequet\thanks{Faculty of Computer Science, The Technion---IIT, Haifa 32000, Israel}%
   \and%
   Sariel Har-Peled\thanks{Department of Computer Science, DCL 2111; University of Illinois; 1304 West Springfield Ave.; Urbana, IL 61801; USA} }

\date{June 30, 2001}
\maketitle

\begin{abstract}
    We present an efficient $O (n + 1/\eps^{4.5})$-time
    algorithm for computing a $(1+\eps$)-approximation of
    the minimum-volume bounding box of $n$ points in
    $\Re^3$.  We also present a simpler algorithm (for the
    same purpose) whose running time is $O (n \log{n} + n /
    \eps^3)$.  We give some experimental results with
    implementations of various variants of the second
    algorithm. The implementation of the algorithm described
    in this paper is available online \cite{h-scpca-00}.
\end{abstract}

\section{Introduction}

In this paper we give efficient algorithms for solving the following
problem:
\begin{quote}
   Given a set $S$ of $n$ points in $\Re^3$ and a parameter
   $0 < \eps \leq 1$, find a box that encloses $S$ and approximates
   the minimum-volume bounding box of $S$ by a factor $(1 + \eps)$.
\end{quote}
We are not aware to any previously-published algorithm that solves
this problem.

Three-dimensional boxes which enclose sets of points are used for
maintaining hierarchical partitioning of sets of points. These data
structures have important applications in computer graphics (e.g., for
fast rendering of a scene or for collision detection), statistics (for
storing and performing range-search queries on a large database of
samples), etc. From a top-down viewpoint, the problem in such
applications is divided into two (admittedly, related) problems of
splitting a given set of points into two (or more) subsets, and of
computing a nearly-optimal box (or another generic shape) that
encloses each subset. In this paper we concentrate on the second
problem.

Numerous heuristics have been proposed for computing a box which encloses a given set of points.  The simplest heuristic was naturally to compute the axis-aligned bounding box (AABB) of the point set.  Two-dimensional variants of this heuristic include the well-known R-tree, the {\em packed\/} R-tree~\cite{rl-dsspd-85}, the R$^+$-tree~\cite{srf-rtdim-87}, the R$^*$-tree~\cite{bkss-trter-90}, etc.  \cite{hkm-ecdmv-95} use a minimum-volume AABB trimmed in a fixed number of directions for speeding up collision detection.  \cite{glm-othsr-96} implement in their RAPID system the OBB-tree (a tree of arbitrarily-oriented bounding boxes), where each box which encloses a set of polygons is aligned with the {\em principal components\/} of the distribution of polygon vertices.  A similar idea is used by \cite{bcgmt-bthrs-96} for the BOXTREE.  The latter work suggests a few variants, in which the computed box is aligned with only one principal component of the point distribution (e.g., the one that corresponds to the smallest or largest eigenvalue of the covariance matrix of the point coordinates), and the other two directions are determined by another method (e.g., by computing the exact minimum-area bounding rectangle of the projection of the points into a plane orthogonal to the first chosen direction).  Other generic shapes, such as a sphere~\cite{h-cdiga-95}, a cone~\cite{s-dasds-90}, or a prism~\cite{fp-ochr3-87,bcgmt-bthrs-96} were also used for maintaining a hierarchical data structure of point containers.  Most of these heuristics require $O (n)$ time and space for computing the bounding box (or another shape) but do not provide a guaranteed value (approximation factor of the optimum) of the output.

An algorithm of \cite{bs-cspta-97} solves a similar problem,
in which the $n$ points are to be contained in two axis-aligned boxes,
and the goal is to minimize the volume (or any other monotone measure)
of the larger box. Their algorithm requires $O (n^2)$ time.

O'Rourke presented the only algorithm (to the best of our knowledge)
for computing the exact arbitrarily-oriented minimum-volume bounding
box of a set of $n$ points in $\Re^3$. His algorithm requires
$O (n^3)$ time and $O (n)$ space.
In this paper we present the first two $(1 + \eps)$-approximation
algorithms for the minimum-volume bounding-box problem. Both algorithms
are nearly linear in $n$.  The running times of these algorithms are
$O (n + 1/\eps^{4.5})$ and $O (n \log n + n / \eps^3)$.

The paper is organized as follows.  In \secref{notations} we give the notations and definitions used throughout the paper.  In \secref{first-alg} and~\secref{second-alg} we present the two $(1 + \eps)$-approximation bounding-box approximations algorithms.  In \secref{experiments} we presents experimental results.  We end in \secref{conclusion} with some concluding remarks.

\section{Notations and Definitions}
\seclab{notations}

We first present the notation used in this paper.

We denote the origin of coordinates by $o$ and the unit cube by $\C$.
Throughout the paper we denote by $Q$ a point set in two dimensions,
and use $S$ to denote a similar set in three dimensions.
Unless specified otherwise, the set is assumed to contain $n$ (or a
comparable number of) points.
We denote by $\CH (Q)$ (resp., $\CH (S)$) the 2-dimensional (resp.,
3-dimensional) convex hull of $Q$ (resp., $S$).
The symbol $P$ is used for denoting a convex polyhedron in three-space.

The symbols $R$ and $B$ are used for rectangles and boxes, respectively.
The notation $B (S)$ is used for any bounding box of a point set $S$.
We also use the notation $B = (b_1, b_2, b_3)$, where $b_1, b_2, b_3$
are three orthogonal vectors in $\Re^3$ having the directions and
sizes of the edges of $B$.
The operators $\Area (R)$ and $\Vol (B)$ denote the area of $R$ and the
volume of $B$, respectively.

We denote by $\Ropt (Q)$ the {\em minimum-area bounding rectangle\/} of $Q$ and by $\Bopt (S)$ the {\em minimum-volume bounding box\/} of $S$.\footnote{Since we are interested in the {\em area\/} (or {\em volume}) of the object, we may pick an arbitrary minimum rectangle (or box) if several exist.  } Let $V$ be a set of orthogonal vectors in $\Re^3$.  We denote by $\Bopt (S,V)$ for the minimum-volume bounding box of $S$ whose set of directions contains $V$.  We denote some constant approximation of the minimum-volume bounding box of $S$ (for some predefined positive constant) by $B^* (S)$.  (Such a box is computed in \lemref{bounding:fifteen} and \lemref{bound:fifteen:ext}---see \secref{first-alg}.)

Finally, given a box $B = (b_1, b_2, b_3)$ in $\Re^3$,
the grid of points spanned by $B$ is
\begin{equation*}
    \Grid (B) = \Set{i_1 b_1 + i_2 b_2 + i_3 b_3}{i_1, i_2, i_3 \in \ZZ}.
\end{equation*}
Let $\G = \Grid (B)$. Denote by
\begin{equation*}
    B^\G_{(i,j,k)}
    =
    \Set{x_1 b_1 + x_2 b_2 + x_3 b_3}
    {%
       \begin{array}{c}
         i \leq x_1 \leq i + 1, \\
         j \leq x_2 \leq j + 1, \\
         k \leq x_3 \leq k + 1, \\
       \end{array}
       i,j,k \in \ZZ
    }
\end{equation*}
the $(i,j,k)$\th cell of $\G$.  For a prescribed constant $d >0$, let
\begin{equation*}
    G (B, d)
    =
    \Set{i_1 b_1 + i_2 b_2 + i_3 b_3}%
    {i_1, i_2, i_3 \in \ZZ, |i_1|, |i_2|, |i_3| \leq d}
\end{equation*}
be the set of points of $\G$ whose $L_\infty$-distance (along the
directions $b_1, b_2, b_3$) from $o$ is at most $d$.

\section{An Efficient Approximation Algorithm}

\seclab{first-alg}

In this section we present our main approximation algorithm to
the minimum-volume bounding box of a set of points in three
dimensions.

\subsection{Approximating the Diameter}

First we need a good-enough approximation of the diameter of the point set.

\begin{defn}
   The {\em diameter\/} of a point set $S \in \Re^3$ (denoted
   by $\Diam(S)$) is the distance between the two furthest points of
   $S$. That is, $\Diam(S) = max_{s,t \in S} |st|$.
\end{defn}

We can easily find a pair of points in $S$ whose mutual distance is a
$1 / \sqrt{d}$-approximation of the diameter of $S$:\footnote{
   Note that we approximate the diameter from {\em below}, while we
   approximate the minimum-volume bounding box from {\em above}.
}

\begin{lemma}
    \lemlab{constant:diam}

   Given a point set $S$ in $\Re^d$ (for a fixed $d$), one can compute in
   $O (n)$ time a pair of points $s,t \in S$,
   such that $|st| \leq \Diam (S) \leq \sqrt{d} |st|$.
\end{lemma}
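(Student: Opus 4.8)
The plan is to reduce the $d$-dimensional diameter to $d$ one-dimensional extent computations, one per coordinate axis, and then bound the resulting loss by an elementary averaging estimate. First I would, for each axis $i \in \{1, \dots, d\}$, scan the point set once and record two points $s^{(i)}, t^{(i)} \in S$ attaining the minimum and maximum $i$-th coordinate; let $w_i = \max_{x \in S} x_i - \min_{x \in S} x_i$ be the corresponding extent. Each scan costs $O(n)$, and since $d$ is a fixed constant the total cost is $O(dn) = O(n)$. I would then output the extremal pair $(s,t) = (s^{(i^*)}, t^{(i^*)})$ along the axis $i^*$ of largest extent, $w_{i^*} = \max_i w_i$.

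The upper bound $|st| \le \Diam(S)$ is immediate, since $\Diam(S)$ is by definition the largest pairwise distance. The content is the lower bound $\Diam(S) \le \sqrt{d}\,|st|$. To establish it, I would let $p,q \in S$ realize the diameter, so that $\Diam(S)^2 = \sum_{i=1}^d (p_i - q_i)^2$. Two observations combine: (i) for every axis $i$ the coordinate gap of any pair is at most the extent, so $|p_i - q_i| \le w_i \le w_{i^*}$; and (ii) the Euclidean distance dominates any single-coordinate gap, so $w_{i^*} = |s_{i^*} - t_{i^*}| \le |st|$. Chaining these gives
\[
\Diam(S)^2 = \sum_{i=1}^d (p_i - q_i)^2 \le \sum_{i=1}^d w_i^2 \le d\,w_{i^*}^2 \le d\,|st|^2,
\]
and taking square roots yields $\Diam(S) \le \sqrt{d}\,|st|$, as desired.

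Honestly, there is no serious obstacle here; the estimate is standard. The only point requiring care is getting the chain of inequalities pointed in the correct direction — in particular remembering that we approximate the diameter from \emph{below} (so $|st| \le \Diam(S)$), while the factor $\sqrt{d}$ attaches to the larger side via the coordinate with the largest extent. I would also verify the degenerate cases (a single point, or coincident extremal points) to confirm the bound holds trivially, and note that the constant $d$ is tight only in the worst case, which does not affect correctness.
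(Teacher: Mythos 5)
Your proposal is correct and is essentially the paper's own proof: the per-axis extents $w_i$ are exactly the edge lengths of the minimum axis-parallel bounding box, your pair $(s^{(i^*)},t^{(i^*)})$ is the pair defining its longest edge, and your chained inequality $\Diam(S)^2 \le \sum_i w_i^2 \le d\,w_{i^*}^2 \le d\,|st|^2$ is just a spelled-out version of the paper's diagonal bound $\Diam(S) \le \sqrt{d}\,l \le \sqrt{d}\,|st|$. No substantive difference; yours merely makes the ``clearly'' step explicit.
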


\begin{proof}
   Let $B$ be the minimum axis-parallel box containing $S$, and let
   $s$ and $t$ be the points in $S$ that define the longest edge of $B$,
   whose length is denoted by $l$.
   By the diameter defn, $|st| \leq \Diam (S)$,
   and clearly, $\Diam (S) \leq \sqrt{d} \, l \leq \sqrt{d} |st|$.
   The points $s$ and $t$ are easily found in $O (n d)$ time.
   Since $d$ is fixed, this is actually $O (n)$ time.
\end{proof}

In particular, we can approximate in linear time the diameter of a
point set in three dimensions by a factor of $1 / \sqrt{3}$.
(In fact, one can find in linear time a $(1 / \sqrt{3})$-approximation
of the diameter in {\em any\/} dimension; see~\cite{ek-adspe-89}.)

Actually, we can find any arbitrarily-good approximation of the diameter.
Since we were not able to find any reference with a proof of the following
folklore lemma (see~\cite{h-aspgd-99} for a similar result),
we include an easy proof of it here.

\begin{lemma}
   \lemlab{diam:eps}

   Given a set $S$ of $n$ points in $\Re^d$ (for a fixed $d$) and
   $\eps > 0$, one
   can compute in $O(n + 1/\eps^{2(d-1)})$ time a pair of points
   $s,t \in S$ such that $|st| \geq (1 - \eps) \Diam (S)$.
\end{lemma}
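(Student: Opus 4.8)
The plan is to decouple the dependence on $n$ from the dependence on $\eps$: first compress $S$ into a point set of size independent of $n$ that still preserves the diameter up to a $(1-\eps)$ factor, and then approximate the diameter of the compressed set by probing a fixed collection of directions. To compress, I would invoke \lemref{constant:diam} to obtain, in $O(n)$ time, an axis-parallel box containing $S$ whose longest edge $l$ satisfies $l \le \Diam(S) \le \sqrt{d}\,l$; in particular $S$ lies in a cube of side $l$, and $l$ is a lower bound on $\Diam(S)$. Lay a grid of cell side $\delta = \eps l/(2\sqrt{d})$ over this cube and, using a hash table keyed by the integer cell coordinates, keep one representative point of $S$ per nonempty cell; call the resulting set $S'$. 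The number of cells is $\pth{l/\delta}^d = O\pth{1/\eps^d}$ for fixed $d$, so $\cardin{S'} = O\pth{1/\eps^d}$ and the whole step costs $O(n)$ time. The crucial observation is that snapping is harmless \emph{multiplicatively}: any two points in a common cell lie at Euclidean distance at most $\delta\sqrt{d} = \eps l/2$, so if $s^*,t^*$ realize $\Diam(S)$ and $s',t' \in S'$ are the representatives of their cells, then $|s't'| \ge |s^*t^*| - \eps l \ge \pth{1-\eps}\Diam(S)$, using $l \le \Diam(S)$. Hence $\Diam(S') \ge \pth{1-\eps}\Diam(S)$, and every point of $S'$ is a genuine point of $S$.

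It remains to approximate $\Diam(S')$. I would precompute a set $D$ of $O\pth{1/\eps^{(d-1)/2}}$ directions such that every unit vector of $\Re^d$ lies within angle $O(\sqrt{\eps})$ of some direction of $D$ (for instance, the normalized images of a $\sqrt{\eps}$-spaced grid on the boundary of a cube). For each $u \in D$ I would scan $S'$ and record the pair $(a_u,b_u)$ minimizing and maximizing the inner product with $u$, and finally report the recorded pair of largest Euclidean length. If $v$ is the unit vector along the diameter segment of $S'$ and $u \in D$ is within angle $O(\sqrt{\eps})$ of $v$, then the extent of $S'$ along $u$ is at least $\Diam(S')\cos O(\sqrt{\eps}) \ge \pth{1-\eps}\Diam(S')$, and $|a_u b_u|$ is at least this extent. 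Composing the two $(1-\eps)$ factors and rescaling $\eps$ by a constant yields a pair $s,t$ with $|st| \ge (1-\eps)\Diam(S)$.

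For the running time, \lemref{constant:diam} and the compression each cost $O(n)$, while probing all $\cardin{D}$ directions over $S'$ costs $O\pth{\cardin{S'}\cdot\cardin{D}} = O\pth{1/\eps^{d+(d-1)/2}} = O\pth{1/\eps^{(3d-1)/2}}$, which is $O\pth{1/\eps^{2(d-1)}}$ for every $d \ge 3$ (and exactly $O\pth{1/\eps^{4}}$ in the case $d=3$ of interest), giving the claimed total of $O\pth{n + 1/\eps^{2(d-1)}}$. The main obstacle is precisely this decoupling: directly probing $S$ in all directions would cost $O\pth{n/\eps^{(d-1)/2}}$ rather than a sum, so the crux is the compression step and, within it, the conversion of the \emph{additive} snapping error into a \emph{multiplicative} $(1-\eps)$ guarantee---which is exactly what the constant-factor lower bound $l \le \Diam(S)$ furnished by \lemref{constant:diam} buys us.
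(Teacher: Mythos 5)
Your proposal is correct for the case the paper actually needs ($d=3$), and its first half coincides with the paper's proof: both compress $S$ onto a grid whose cell size is a constant times $\eps$ times a constant-factor diameter estimate derived from \lemref{constant:diam}, reducing to $O(1/\eps^d)$ candidates in $O(n)$ time, and both convert the additive snapping error into a multiplicative $(1-\eps)$ factor using the lower bound $l \leq \Diam(S)$ --- you correctly identified this as the crux. Where you genuinely diverge is the second half. The paper stays elementary: it prunes every grid point lying between two other points along an axis-parallel grid line (such a point cannot realize the diameter, since $|ab| \leq \max(|a_1 b|, |a_2 b|)$ when $a$ is a convex combination of $a_1, a_2$), leaving $O(1/\eps^{d-1})$ points, and then brute-forces all pairs in $O(1/\eps^{2(d-1)})$ time. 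You instead probe a $\sqrt{\eps}$-net of $O(1/\eps^{(d-1)/2})$ directions and keep directional extrema, costing $O(1/\eps^{(3d-1)/2})$; the extent argument via $\cos\theta \geq 1 - \theta^2/2$ is sound, and a pleasant side effect is that your compressed set consists of actual points of $S$, so no back-mapping bookkeeping is needed (the paper snaps to grid points and must recover the originating pair). Your route is in fact asymptotically \emph{better} than the paper's bound for $d \geq 4$ and matches it exactly at $d=3$. The one caveat, which you flagged yourself: the lemma as stated covers every fixed $d$, including $d=2$, where your $O(1/\eps^{5/2})$ exceeds the claimed $O(1/\eps^{2})$; so as written your argument does not establish the stated bound in the plane. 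This is easily repaired by grafting in the paper's pruning step before probing (or simply using the paper's brute force when $d=2$), and since pruning and probing compose, the hybrid $O\pth{1/\eps^{3(d-1)/2}}$ bound would beat both arguments in all dimensions.
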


\begin{proof}
    Let $B$ be the minimum axis-parallel box containing $S$, and let $\G = \Grid ((\eps / (2 \sqrt{d})) B)$.  For a point $x \in \Re^d$, denote by $x_\G$ the closest point of $\G$ to $x$. Define $S_\G = \Set{x_\G }{x \in S}$.  Finally, let $l$ be the length of the longest diagonal of a cell of $\G$. Clearly, $l \leq (\eps / 2) \Diam (S)$.  For every pair of points $x,y \in S$ we have
   $$
      |xy| - l \leq |x_\G y_\G| \leq |xy| + l.
   $$
   Thus, $\Diam (S_\G) \geq \Diam (S) - l$.
   Let $s,t$ be the two points in $S$ for which $s_\G, t_\G$ realize the
   diameter of $S_\G$. Now,
   $$
      |st| \geq |s_\G t_\G| - l = \Diam (S_\G) - l \geq \Diam (S) - 2 l
         \geq (1 - \eps) \Diam (S).
   $$

   The set $S_\G$ can be computed in $O (n)$ time (where the hidden
   constant of proportionality contains a factor $d$).  The cardinality
   of $S_\G$ is $O (1 / \eps^d)$. Thus, we can compute the diameter of
   $S_\G$ (in a brute force manner) in $O (1 / \eps^{2d})$ time.
   The two points of $S$ that correspond to the diameter of $S_\G$
   are, by the above analysis, the sought $(1 - \eps)$-approximation
   of the diameter of $S$.

   Any point of $S_\G$ that lies between two other points of $S_\G$
   (along one of the axes) can not correspond to a point of $S$ that
   defines $\Diam (S)$.
   By removing all such points in $O (n)$ time we can consider only
   $O (1 / \eps^{d-1})$ points.
   Hence the running time of the algorithm is improved to
   $O (n + 1 / \eps^{2 (d-1)})$.
\end{proof}

The two points of $S_\G$ that realize its diameter must be vertices of the convex-hull of $S_\G$.  Thus, the running time of the algorithm of \lemref{diam:eps} can be further improved by first computing the set of vertices of $\CH (S_\G)$, denoted by $S_{\CH(S_\G)}$.  Set $h=|S_{\CH(S_\G)}|$.  \cite{a-lbvsc-63} showed that $h = O (1 / \eps^{(d-1)d/(d+1)})$.  For $d \leq 3$ we compute $S_{\CH(S_\G)}$ by computing the entire convex-hull of $S_\G$ in $O (|S_\G| \log |S_\G|)$ time.  For higher dimensions we use an output-sensitive algorithm~\cite{c-osrch-96}.  Let $m$ denote the number of points in $S_\G$.  Clearly, $m = O (1 / \eps^{(d-1)})$.  The time required for computing $S_G$ is
$$
   O \pth{ m \log^{d+2} h + (mh)^{1-\frac{1}{\floor{d/2}+1} } \log^{O(1)} m }
      = O \pth{ m^{\frac{2d}{d+1}}}
      = O \pth{\pth{\frac{1}{\eps}}^{\frac{2d(d-1)}{d+1}}}.
$$
Computing the diameter of $S_{\CH(S_\G)}$ (in a brute-force manner)
requires $O(h^2) = O\pth{\pth{{1}/{\eps}}^{\frac{2d(d-1)}{d+1}}}$ time.
Overall, the running time of the algorithm of
\lemref{diam:eps} can be improved to
$O\pth{\pth{{1}/{\eps}}^{\frac{2d(d-1)}{d+1}}}$.

We can do even better in three dimensions if we are willing to sacrifice
simplicity. In this case we compute the exact diameter of $\CH (S_\G)$
in $O((1 / \eps^{3/2}) \log{(1/\eps)})$ time~\cite{cs-arscg-89}.
Overall, we compute a $(1 - \eps)$-approximation of the diameter of
$S \in \Re^3$ in $O (n + (1 / \eps^{3/2}) \log{(1/\eps)})$ time.

\subsection{Computing an Approximating Box}

Let $Q$ be a set of $n$ points in $\Re^2$.  Computing $\Ropt (Q)$ can
be done in $O (n \log{n})$ time~\cite{t-sgprc-83}.  (Hence, given a set
$S$ of $n$ points and a direction $v$ in $\Re^3$, one can compute
$\Bopt(S,\{v\})$ in $O (n \log{n})$ time.)  The bottleneck of the cited
algorithm is the computation of $\CH (Q)$; when the latter is given in
advance, $\Ropt (Q)$ can be computed in optimal $\Theta (n)$ time.

\begin{lemma}
   \lemlab{bounding:fifteen}

   Given a set $S$ of $n$ points in $\Re^3$, one can compute in
   $O (n)$ time a bounding box $B (S)$ with
   $\Vol (\Bopt (S)) \leq \Vol (B (S)) \leq
      6 \sqrt{6} \, \Vol (\Bopt (S))$.
\end{lemma}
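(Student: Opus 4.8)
The plan is to build an orthonormal frame $v_1,v_2,v_3$ greedily from approximate diameters, output the box aligned with that frame, and then charge its volume against an inscribed simplex spanned by three chords of $S$.

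\emph{Construction.} First I would run \lemref{constant:diam} (with $d=3$) on $S$ to obtain a pair $s,t\in S$ with $\ell_1:=|st|$ and $\Diam(S)\le\sqrt{3}\,\ell_1$, and set $v_1=(s-t)/|st|$. Projecting $S$ onto the plane $H_1=v_1^\perp$ gives a planar set $Q$ in $O(n)$ time. Running \lemref{constant:diam} again (with $d=2$) on $Q$ yields a pair realizing a value $\ell_2$ with $\Diam(Q)\le\sqrt{2}\,\ell_2$, whose direction $v_2\in H_1$ I adopt; I then set $v_3=v_1\times v_2$. Letting $\Delta_i$ denote the extent of $S$ along $v_i$, the output $B$ is the box with these three extents, so $\Vol(B)=\Delta_1\Delta_2\Delta_3$. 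Every step is $O(n)$ in fixed dimension. Since $B$ is a bounding box and $\Bopt(S)$ has minimum volume, the inequality $\Vol(\Bopt(S))\le\Vol(B)$ holds for free; all the work is in the upper bound.

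\emph{Upper bound via an inscribed simplex.} Two cheap observations bound the extents by the chords: $\Delta_1\le\Diam(S)\le\sqrt{3}\,\ell_1$, and since $v_2,v_3\in H_1$ we have $\Delta_2\le\Diam(Q)\le\sqrt{2}\,\ell_2$, while $\Delta_3$ is exactly the $v_3$-extent of $Q$. Hence $\Vol(B)\le\sqrt{6}\,\ell_1\ell_2\Delta_3$. For the matching lower bound I exhibit three chords of $S$: the segment $st$ (parallel to $v_1$ by the definition of $v_1$), a preimage of the $v_2$-diameter chord of $Q$ (its difference vector has $v_2$-component $\ell_2$ and $v_3$-component $0$), and a preimage of a pair realizing $\Delta_3$ (its difference vector has $v_3$-component $\Delta_3$). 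Written in the basis $(v_1,v_2,v_3)$, these three difference vectors form a lower-triangular matrix with diagonal $(\ell_1,\ell_2,\Delta_3)$, so they are linearly independent with $|\det|=\ell_1\ell_2\Delta_3$. The convex hull of these three chords lies inside $\CH(S)\subseteq\Bopt(S)$, and I claim its volume is at least $\tfrac16\,\ell_1\ell_2\Delta_3$. Combining, $\Vol(B)\le\sqrt{6}\,\ell_1\ell_2\Delta_3\le 6\sqrt{6}\,\Vol(\CH(S))\le 6\sqrt{6}\,\Vol(\Bopt(S))$. The target constant factors exactly as $6\cdot\sqrt{3}\cdot\sqrt{2}$, the $\sqrt{3}$ and $\sqrt{2}$ being the two diameter-approximation losses and the $6$ being the simplex constant.

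\emph{Main obstacle.} The crux is the volume bound just invoked: the convex hull of three segments with linearly independent direction vectors $d_1,d_2,d_3$ has volume at least $\tfrac16|\det(d_1,d_2,d_3)|$. Applying the linear map sending $d_i\mapsto e_i$ reduces this to showing that the hull of three unit segments parallel to the coordinate axes, placed arbitrarily, has volume at least $\tfrac16$. I expect this to be the only nontrivial step, and the tricky point is that one cannot simply select a sub-tetrahedron on four of the six endpoints: for the symmetric placement the hull is the octahedron with vertices $\pm\tfrac12 e_i$, of volume exactly $\tfrac16$, yet every inscribed tetrahedron is strictly smaller. The argument must therefore bound the whole hull, e.g.\ by verifying that the minimizing configuration is the ``corner'' simplex (three segments sharing an endpoint) or the centered octahedron, both of volume $\tfrac16$. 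I would also dispose of the degenerate case separately: if $S$ is flat then some $\Delta_i=0$, so $\Vol(B)$ and the lower bound both vanish and the inequality is trivial.
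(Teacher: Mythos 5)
Your construction, running-time analysis, and upper bound are exactly the paper's (the frame built from the two approximate diameters, and $\Vol(B)\le\sqrt{3}\,\ell_1\cdot\sqrt{2}\,\ell_2\cdot\Delta_3$). The gap is the step you yourself flag as the crux, and it is a real gap, not a routine verification you can defer: the claim that the convex hull of three segments with direction vectors $d_1,d_2,d_3$ has volume at least $\tfrac16\lvert\det(d_1,d_2,d_3)\rvert$ is never proved, and it is genuinely hard. After your linear reduction the segments are axis-parallel unit segments in \emph{arbitrary} position -- in particular $I_1$ and $I_2$ may lie at different heights, so their hull is already a tetrahedron and no planar picture applies. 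The statement is true, but the minimum $\tfrac16$ is attained on a large, flat family of inequivalent configurations (the corner simplex, the centered octahedron, any three segments through a common point, crossing-plus-transversal configurations, and even skew families such as $I_1,I_2$ centered on the $z$-axis at heights $\mp h$ with $I_3$ centered at the origin, which give exactly $\tfrac16$ for all $0\le h<\tfrac12$). Because the minimizer is so degenerate, no local or greedy argument works: as you observe, picking four of the six endpoints fails, and ``verifying that the minimizing configuration is the corner simplex or the centered octahedron'' is not a proof step but a restatement of the claim -- the proofs I know go through convexity of the hull volume as a function of the translations (a F\'ary--R\'edei-type theorem) plus a symmetrization argument. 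The bound one gets cheaply, namely $\CH(I_1\cup I_2\cup I_3)\supseteq\tfrac13\pth{I_1\oplus I_2\oplus I_3}$ up to translation, yields only volume $\ge\tfrac1{27}$, hence an approximation factor $27\sqrt6$ rather than the claimed $6\sqrt6$.

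The paper avoids this entirely by exploiting the special structure of its chords instead of reducing to a general-position lemma. Since $st$ is parallel to the projection direction, one can fan-triangulate $\CH(Q)$ from $s_H$ (the projection of $s$) into triangles $T_i$ with outer edges $p_i'q_i'$, and lift each $T_i$ to the tetrahedron $\CH\pth{\{p_i,q_i,s,t\}}$; by shearing invariance each has volume $\Area(T_i)\,|st|/3$, and they are interior-disjoint because their projections are. This gives $\Vol(\CH(S))\ge\Area(\CH(Q))\,|st|/3$, and the planar bound $\Area(\CH(Q))\ge\Area(s'ut'v)=|s't'|\,\omega/2$ (where $u,v$ realize the width $\omega=\Delta_3$ of $Q$, on opposite sides of the diagonal $s't'$) then yields $\Vol(\CH(S))\ge\ell_1\ell_2\Delta_3/6$ by entirely elementary means. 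To repair your proof, either supply a genuine proof of the three-segment lemma, or restructure the lower bound as the paper does: keep only the chord $st$ (which is exactly parallel to $v_1$) and replace your other two chords by the planar quadrilateral in the projection, lifted along $st$.
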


\begin{proof}
   By using the algorithm of \lemref{constant:diam} we
   compute in $O (n)$ time two points $s,t \in S$ which form a
   $(1 / \sqrt{3})$-approximation of the diameter of $S$.
   Let $H$ be a plane perpendicular to $st$, and let $Q$ be the
   orthogonal projection of $S$ into $H$.

   Now, by using the algorithm of \lemref{constant:diam} again, we compute in $O (n)$ time two points $s',t' \in Q$ (see \figref{2d-br}) for which $\sqrt{2} |s't'| \geq \Diam (Q)$.  Let $\mu$ be a direction perpendicular to $st$ and $s't'$.  We claim that the box $B^* = \Bopt (S, \{st, s't', \mu\})$ is a $(6 \sqrt{6})$-approximation of $\Bopt (S)$.

   Indeed, let $R$ be the minimum-area bounding rectangle of $Q$ in the
   directions $s't'$ and $\mu$, let $\omega$ be the length of the edge of
   $R$ in the direction $\mu$, and let $u,v$ be the two points of $Q$
   lying on the two edges of $R$ parallel to $s't'$.  Clearly,
   $$
      \Area (R) \leq \omega \Diam (Q) \leq \sqrt{2} \, \omega |s't'|.
   $$
   On the other hand, the quadrilateral $F = s'ut'v$ is contained in
   $\CH (Q)$, and its area is $|s't'| \omega / 2$.

   \begin{figure}
      \centering
      \begin{tabular}{c}
\begin{picture}(0,0)%
\includegraphics{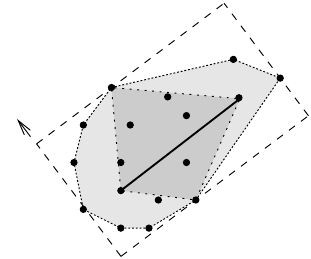}
\end{picture}%
\setlength{\unitlength}{3947sp}%
\begingroup\makeatletter\ifx\SetFigFont\undefined
\def\x#1#2#3#4#5#6#7\relax{\def\x{#1#2#3#4#5#6}}%
\expandafter\x\fmtname xxxxxx\relax \def\y{splain}%
\ifx\x\y   %
\gdef\SetFigFont#1#2#3{%
  \ifnum #1<17\tiny\else \ifnum #1<20\small\else
  \ifnum #1<24\normalsize\else \ifnum #1<29\large\else
  \ifnum #1<34\Large\else \ifnum #1<41\LARGE\else
     \huge\fi\fi\fi\fi\fi\fi
  \csname #3\endcsname}%
\else
\gdef\SetFigFont#1#2#3{\begingroup
  \count@#1\relax \ifnum 25<\count@\count@25\fi
  \def\x{\endgroup\@setsize\SetFigFont{#2pt}}%
  \expandafter\x
    \csname \romannumeral\the\count@ pt\expandafter\endcsname
    \csname @\romannumeral\the\count@ pt\endcsname
  \csname #3\endcsname}%
\fi
\fi\endgroup
\begin{picture}(2467,2049)(547,-1724)
\put(2167,-1378){\makebox(0,0)[lb]{\smash{\SetFigFont{12}{14.4}{rm}$v$}}}
\put(547,-538){\makebox(0,0)[lb]{\smash{\SetFigFont{12}{14.4}{rm}$\mu$}}}
\put(1280,-283){\makebox(0,0)[lb]{\smash{\SetFigFont{12}{14.4}{rm}$u$}}}
\put(2394,-406){\makebox(0,0)[lb]{\smash{\SetFigFont{12}{14.4}{rm}$t'$}}}
\put(1321,-1223){\makebox(0,0)[lb]{\smash{\SetFigFont{12}{14.4}{rm}$s'$}}}
\put(1607,-630){\makebox(0,0)[lb]{\smash{\SetFigFont{12}{14.4}{rm}$Q$}}}
\put(864,-1224){\makebox(0,0)[lb]{\smash{\SetFigFont{12}{14.4}{rm}$\omega$}}}
\put(2101,-61){\makebox(0,0)[lb]{\smash{\SetFigFont{12}{14.4}{rm}$R$}}}
\end{picture}

      \end{tabular}
      \caption{The convex hull of the  projection of $S$ contains a large
               quadrangle}
      \figlab{2d-br}
   \end{figure}

   Set $\SCH = \CH (S)$. Let $s_H$ be the orthogonal projection of $s$ into
   $H$, and let $T_1, \ldots, T_m$ be a triangulation of $\CH (Q)$ (within $H$)
   in which all the triangles share the vertex $s_H$.

   We next show that $\Vol (\SCH) \geq \Area(\CH (Q)) |st| / 3$.  The outer
   edge $p_i' q_i'$ of $T_i$ is a projection of a pair of points
   $p_i, q_i \in S$ (for $i = 1, \ldots, m$) into $H$.
   Thus, each triangle $T_i$ corresponds to a tetrahedron
   $\T_i = \CH \pth{\{ p_i, q_i, s, t\}}$ (for $i = 1, \ldots, m$).
   These tetrahedra are pairwise disjoint in their interiors, and
   $$
   \Vol (\T_i) = \Vol (\CH (\{p_i, q_i ,s ,t\})) = \frac{\Area (T_i) |st|}{3},
   $$
   where we use the fact that the volume of a tetrahedron does not
   change when we translate one of its vertices $u$ in a direction
   parallel to an edge (of the tetrahedron) that does not coincide
   with $u$ (or, in more generality, parallel to the opposite face
   of $u$).  We thus have
   \begin{eqnarray*}
       \lefteqn{ \hspace{-1cm}
       \frac{|s't'| |st| \omega}{6}
          =    \frac{\Area(F) |st|}{3}
          \leq \frac{\Area (\CH (Q)) |st|}{3} } \\
       & &
          =    \frac{(\sum_{i=1}^{m} \Area (T_i)) |st|}{3}
          =    \sum_{i=1}^{m} \Vol (\T_i)
          \leq \Vol (\SCH).
   \end{eqnarray*}

   On the other hand,
   $$
      \Vol (\SCH) \leq \Vol (B^*) \leq \Area (R) \Diam(S) \leq
         \sqrt{2} \omega |s't'| \sqrt{3} |st|.
   $$
   Since $S \subseteq B^*$, we have
   $\Vol (\SCH) \leq \Vol(B^*) \leq 6 \sqrt{6} \, \Vol(\SCH)$,
   as required.
\end{proof}

Note that the approximation factor for the diameter obtained in
\lemref{bounding:fifteen} can be improved by using better
approximation schemes, e.g., the algorithm of \lemref{diam:eps}.
In any case a better constant approximation factor will be reflected by a
higher constant hidden in the big-Oh notation of the running time.

The algorithm of \lemref{bounding:fifteen} can be extended to a $d$-dimensional space by choosing the direction of the exact diameter $v$ of the point set $S$ as one direction of the bounding box, projecting the points to a hyperplane $H \in \Re^{d-1}$ perpendicular to $v$, solving recursively the $(d-1)$-dimensional problem in $H$, and outputing the Cartesian product of the $(d-1)$-dimensional solution and $v$.  The volume of the computed box is at most $d!$ times the volume of the optimal (minimum-volume) bounding box of $S$.\footnote{ Here we can trade time for approximation quality.  By investing less time we can compute a $(1 / \sqrt{2})$- (resp., $(1 / \sqrt{3})$-) approximation of the diameter in two (resp., $k$, for $3 \leq k \leq d$) dimensions, and obtain a $(\sqrt{2} \cdot 3^{(d-2)/2} d!)$-approximation of $\Bopt (S)$.  By investing even less time we can compute a $(1 / \sqrt{k})$-approximation of the diameter in $k$ dimensions (for $2 \leq k \leq d$), and obtain a $(d!)^{3/2}$-approximation of $\Bopt (S)$.  } Moreover, the side lengths of the bounding box are found in decreasing order. \cite{fl-fmfai-95} use a similar method for visualizing a set of points in a high-dimensional space. They compute in a similar manner the first two (or three) directions, project the points into the subspace spanned by these directions, and display the projected points in this subspace.

We next show that a large-enough convex polyhedron contained in the
unit cube $\C$ must contain a large-enough axis-parallel cube.

\begin{lemma}
   \lemlab{obvious}

   Let $P$ be a convex polyhedron with the properties that
   $P \subseteq \C$ and $\Vol (P) \geq 1/15$.
   Then there exists a translation $v \in \Re^3$ for which
   $\frac{1}{107} \C + v \subseteq P$.
\end{lemma}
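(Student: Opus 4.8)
The plan is to turn this into a lower bound on the inradius of $P$ and then inscribe a cube inside a ball. The starting observation is that a closed Euclidean ball of radius $\rho$ contains the concentric axis-parallel cube of side $2\rho/\sqrt{3}$ (the cube whose circumradius equals $\rho$). Hence it suffices to show that $P$ contains a Euclidean ball of radius $\rho$ with $2\rho/\sqrt{3} \ge 1/107$, i.e.\ to bound the inradius of $P$ from below. So the whole statement reduces to: a convex body in $\C$ of volume at least $1/15$ cannot have a small inradius.

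To bound the inradius I would route through the minimal width $w$ of $P$ (the least distance between two parallel supporting planes) in two steps. First, a lower bound on $w$: letting $u$ be a direction realizing $w$ and integrating cross-sectional areas along $u$, Fubini gives $\Vol(P) = \int A(t)\,dt \le w \cdot \max_t A(t)$. Each slice $P \cap \{\text{plane}\}$ is a planar convex set lying in a planar section of the unit cube, which is contained in the disk cut from the circumscribed ball of $\C$ (radius $\sqrt{3}/2$); thus $A(t) \le \tfrac{3\pi}{4}$ and $w \ge \tfrac{4}{3\pi}\Vol(P) \ge \tfrac{4}{45\pi}$. Second, a passage from width to inradius: in $\Re^3$ one has $w \le 2\sqrt{3}\,\rho$, so $\rho \ge w/(2\sqrt{3}) \ge \tfrac{2}{45\sqrt{3}\,\pi}$. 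Combining with the first observation, $P$ contains an axis-parallel cube of side $2\rho/\sqrt{3} \ge \tfrac{4}{135\pi} > 1/107$, which is exactly the claimed bound (note that $4/(135\pi) \approx 1/106.0$, so the constant $107$ is essentially forced by this chain and leaves only a razor-thin margin).

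The hard part is the width-to-inradius inequality $w \le 2\sqrt{3}\,\rho$, the only nonelementary ingredient; everything else (the cube-in-ball inclusion and the slice-area bound) is routine. This is Steinhagen's inequality in dimension $3$, and to keep the argument self-contained I would reprove the direction I need. Take a largest inscribed ball $B(c,\rho)$ of $P$; a standard optimality/Helly argument shows its boundary touches $\partial P$ at points $p_1,\dots,p_k$ with $k \le 4$ whose outer unit normals $u_i = (p_i-c)/\rho$ satisfy $0 \in \mathrm{conv}\{u_1,\dots,u_k\}$. Then $P$ is trapped in the polyhedron $\bigcap_i \{x : \langle x-p_i,u_i\rangle \le 0\}$, and a direct estimate on this configuration---whose extremal, width-maximizing instance is the regular tetrahedron, for which $w/\rho = 2\sqrt{3}$ exactly---gives the bound. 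Because the margin above is so thin, I would keep the slice-area step precise (the circumscribed-disk bound $3\pi/4$ rather than a cruder constant), since that is what makes the final inequality land at $1/107$.
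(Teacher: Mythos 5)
Your proposal follows essentially the same route as the paper's proof: the cross-section bound $3\pi/4$ giving $\Width(P) \geq \Vol(P)/(3\pi/4) \geq 4/(45\pi)$, the width-to-inradius inequality $r(P) \geq \Width(P)/(2\sqrt{3})$, and finally the axis-parallel cube of side $(2/\sqrt{3})\,r(P) \geq 1/107$ inscribed in the inball. The only difference is that the paper simply cites the width-to-inradius (Steinhagen-type) inequality from the literature rather than reproving it, so your self-contained sketch of that step is correct in spirit but unnecessary extra work.
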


\begin{proof}
    It is easy to verify that the area of the intersection of
    $\C$ with any plane is at most $3 \pi / 4$.
    (Indeed, the intersection area is maximized when the plane passes
    through the center of $\C$, and such a cross-section is contained
    in a disk of radius $\sqrt{3} / 2$.) Hence,
    $$
       \Width (P) \geq
          \Vol (P) / \pth{ \frac{3 \pi}{4} } \geq
          \frac{4}{45 \pi},
    $$
    where $\Width (P)$ is the minimum distance between two
    parallel planes supporting $P$.
    Let $r (P)$ be the radius of the largest ball $K$ inscribed in $P$.
    It is known~\cite{gk-iojrc-92} that
    $r (P) \geq \Width (P) / (2 \sqrt{3})$.
    This implies $r (P) \geq 2 / (45 \sqrt{3} \pi)$.
    Finally, $K$ inscribes an axis-parallel cube $\C'$ whose side is of
    length $(2 / \sqrt{3}) r (P) \geq 1/107$,
    and $\C' \subseteq K \subseteq P$, as asserted.
\end{proof}

Let $A$ and $B$ be two sets in $\Re^3$. The {\em Minkowski sum\/} of $A$ and $B$ is the set $A \oplus B = \Set{a + b }{a \in A, b \in B}$.
Given a constant $c > 0$ and a box $B \in \Re^3$, it is easily verified that
$B \oplus (c \cdot B)$ is also a box whose volume is $(1 + c)^3 \Vol (B)$.

The idea of our main algorithm is approximating $\CH (S)$
by a low-complexity convex polyhedron $P \supseteq \CH (S)$,
followed by computing (exactly) $\Bopt (P)$.
The polyhedron $P$ is chosen such that $\Bopt (P)$ is an
$(1 + \eps)$-approximation of $\Bopt (S)$.

First we need a combined version of \lemref{bounding:fifteen} and~\lemref{obvious}:

\begin{lemma}
   \lemlab{bound:fifteen:ext}

   Let $S$ be a set of $n$ points in $\Re^3$. One can compute in
   $O (n)$ time a bounding box $B^* (S)$ such that
   $\Vol (\Bopt (S)) \leq \Vol (B^* (S)) \leq 15 \, \Vol (\Bopt (S))$
   and there exists a translation $v \in \Re^3$ for which
   $\frac{1}{107} B^* (S) + v \subseteq \CH (S)$.
\end{lemma}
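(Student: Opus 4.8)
The plan is to take $B^*(S)$ to be exactly the box produced by \lemref{bounding:fifteen}, and then to transport \lemref{obvious} from the unit cube to this box by an affine change of coordinates. First I would dispose of the volume bound: the box $B(S)$ of \lemref{bounding:fifteen} satisfies $\Vol(\Bopt(S)) \le \Vol(B(S)) \le 6\sqrt{6}\,\Vol(\Bopt(S))$, and since $6\sqrt{6} = \sqrt{216} \le \sqrt{225} = 15$, setting $B^*(S) = B(S)$ already gives $\Vol(\Bopt(S)) \le \Vol(B^*(S)) \le 15\,\Vol(\Bopt(S))$. The very same inequality $6\sqrt{6} \le 15$ is what will let the density hypothesis of \lemref{obvious} apply, so it is worth isolating at the outset.

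For the inclusion claim I would extract from the proof of \lemref{bounding:fifteen} the stronger intermediate bound it actually establishes, namely that it sandwiches $\Vol(\CH(S)) \le \Vol(B^*(S)) \le 6\sqrt{6}\,\Vol(\CH(S))$, and hence $\Vol(\CH(S)) \ge \frac{1}{6\sqrt{6}}\,\Vol(B^*(S))$. Next I would let $T$ be the invertible affine map that sends the box $B^*(S)$ onto the unit cube $\C$, carrying the three edge directions of $B^*(S)$ to the coordinate axes. Because $T$ is affine it preserves convex hulls and scales every volume by the common factor $1/\Vol(B^*(S))$; consequently $T(\CH(S)) = \CH(T(S))$ is a convex polyhedron contained in $\C$ (as $T(S) \subseteq T(B^*(S)) = \C$) with $\Vol(\CH(T(S))) = \Vol(\CH(S))/\Vol(B^*(S)) \ge \frac{1}{6\sqrt{6}} \ge \frac{1}{15}$.

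I would then apply \lemref{obvious} to $P = \CH(T(S))$, obtaining a translation $w$ with $\frac{1}{107}\C + w \subseteq \CH(T(S))$. Applying $T^{-1}$ to both sides and using $T^{-1}(\CH(T(S))) = \CH(S)$ yields $T^{-1}\pth{\frac{1}{107}\C + w} \subseteq \CH(S)$. Finally I would check that $T^{-1}$ carries the scaled cube $\frac{1}{107}\C$ onto a translate of $\frac{1}{107}B^*(S)$: the linear part of $T^{-1}$ sends $\C$ to the box with edge vectors $b_1,b_2,b_3$, so it sends $\frac{1}{107}\C$ to the box with edge vectors $b_i/107$, i.e.\ to $\frac{1}{107}B^*(S)$ up to translation. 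This gives $\frac{1}{107}B^*(S) + v \subseteq \CH(S)$ for a suitable $v$, and the running time is $O(n)$ since $B^*(S)$ is computed in $O(n)$ time by \lemref{bounding:fifteen} while nothing else depends on $n$.

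The genuinely delicate point, I expect, is the bookkeeping around the affine map rather than any new geometry. One must use that affine maps preserve volume \emph{ratios} (they are in general not volume preserving) in order to transfer the hypothesis $\Vol(P) \ge 1/15$ into the cube picture, and one must track how $T^{-1}$ rescales the axis-aligned inscribed cube back into a box aligned with $B^*(S)$, so that the conclusion is correctly phrased as a translate of $\frac{1}{107}B^*(S)$ rather than of some rotated cube. Everything else is the numerical check $6\sqrt{6}\le 15$, which simultaneously validates the factor-$15$ volume bound and the density threshold of \lemref{obvious}.
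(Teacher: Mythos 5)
Your proposal is correct and follows essentially the same route as the paper: both take $B^*(S)$ to be the box of \lemref{bounding:fifteen}, note that its proof actually sandwiches $\Vol(\CH(S)) \leq \Vol(B^*(S)) \leq 6\sqrt{6}\,\Vol(\CH(S))$ with $6\sqrt{6} \leq 15$, map $B^*(S)$ affinely onto the unit cube (using that affine maps preserve volume ratios), apply \lemref{obvious}, and pull the inscribed cube back to a translate of $\frac{1}{107}B^*(S)$. Your write-up is, if anything, more explicit than the paper's at the step it dismisses as ``careful observation of the construction,'' so there is nothing to fix.
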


\begin{proof}
   Let $B = B^* (S)$ be the bounding box of $S$ computed by
   \lemref{bounding:fifteen}. Let $T$ and $t$ be a nonsingular
   linear transformation and a translation, respectively, such that
   $T (B) + t = \C$.
   Careful observation of the construction in
   \lemref{bounding:fifteen} shows that
   $\Vol (T (\CH (S))) \geq \frac{1}{15}$.
   Hence by \lemref{obvious} there exists a translation $v'$
   such that $\frac{1}{107} (T(B) + t) + v' \subseteq T (\CH (S)) + t$,
   where we use the fact that linear transformations
   preserve volume order.  Therefore
   $\frac{1}{107} B + T^{-1} (v' - \frac{106}{107} t) \subseteq \CH(S)$,
   and the claim follows.
\end{proof}

Note that this lemma is true even if $S$ is a planar set, in which
case the minimum-area bounding rectangle provided by
\lemref{bounding:fifteen} degenerates to a segment, and the
volume of the computed bounding box is 0. This guarantees that the
approximation algorithm (described below) produces a degenerate box
(of volume 0) for a degenerate (planar) input point set.

We are now ready to present the approximation algorithm for $\Bopt (S)$.

Let $B = B^* (S)$ be the bounding box of $S$ computed by \lemref{bound:fifteen:ext}, and let $B_\eps$ be a translated copy of $\frac{\eps}{428} B$ centered at $o$.  In addition, define $\SCHX = \CH (S) \oplus B_{\eps}$ and $\G = \Grid (\frac{1}{2} B_\eps)$.  We approximate $S$ on $\G$. For each point $p \in S$ let $\G (p)$ be the set of eight vertices of the cell of $\G$ that contains $p$, and let $S_\G = \cup_{p \in S} \G(p)$. Define $P = \CH (S_\G)$.  Clearly, $\CH (S) \subseteq P \subseteq \SCHX$. Moreover, one can compute $P$ in $O (n + (1/\eps^2) \log{(1/\eps)})$ time.  On the other hand, $P \subseteq B \oplus B_\eps$.  The latter term is a box which contains at most $k = 860 / \eps + 4$ grid points along each of the directions set by $B$, so $k$ is also an upper bound for the number of grid points contained by $P$ in each direction.  \cite{a-lbvsc-63} showed that the complexity of $P$ is $O (k^{3/2}) = O (1 / \eps^{3/2})$.  We exploit this result in the analysis of the running time of the algorithm.  Finally, we compute $\Bopt (P)$ exactly.

It remains to show that $\Bopt (P)$ is a $(1 + \eps)$-approximation of
$\Bopt (S)$.
Let $\Bopt^{\eps}$ be a translation of $\frac{\eps}{4} \Bopt (S)$
that contains $B_\eps$. (The existence of $\Bopt^{\eps}$ is guaranteed by
\lemref{bound:fifteen:ext}.) Thus,
$P \subseteq \CH (S) \oplus B_\eps \subseteq \CH (S) \oplus \Bopt^\eps
    \subseteq \Bopt (S) \oplus \Bopt^\eps$. Since
$\Bopt (S) \oplus \Bopt^\eps$ is a box, it is a bounding box of $P$
and therefore also of $\CH (S)$. Its volume is
$$
   \Vol (\Bopt (S) \oplus \Bopt^\eps) =
      \pth{1 + \frac{\eps}{4}}^3 \Vol(\Bopt (S))
   < (1 + \eps) \, \Vol (\Bopt (S)),
$$
as desired. (The last inequality is the only place where we use the
assumption $\eps \leq 1$.)

To recap, the algorithm consists of the four following steps:
\begin{enumerate}
   \item Compute the box $B^* (S)$ (see \lemref{bound:fifteen:ext})
         in $O (n)$ time.

   \item Compute the point set $S_\G$ in $O (n)$ time.

   \item Compute $P = \CH (S_\G)$ in $O (n + (1/\eps^2) \log{(1/\eps)})$
         time.
         This is done by computing the convex hull of all the extreme points
         of $S_\G$ along vertical lines of $\G$. We have $O (1 / \eps^2)$
         such points, thus computing their convex hull takes
         $O ((1 / \eps^2) \log (1 / \eps))$ time.

         \item Compute $\Bopt (P)$ by the algorithm of \cite{o-fmeb-85}.  The complexity of $P$ is $O (1/\eps^{3/2})$~\cite{a-lbvsc-63}, so this step requires $O ((1 / \eps^{3/2})^3) = O (1 / \eps^{4.5})$ time.
\end{enumerate}

Thus we obtain our main result:

\begin{theorem}
    \thmlab{aprox}

   Let $S$ be a set of $n$ points in $\Re^3$, and let $0 < \eps \leq 1$ be a
   parameter. One can compute in $O (n  + 1 / \eps^{4.5})$ time a bounding
   box $B (S)$ with $\Vol (B (S)) \leq (1 + \eps) \, \Vol (\Bopt (S))$.
   \qed
\end{theorem}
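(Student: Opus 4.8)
The plan is to assemble the four-step algorithm already outlined above and then verify, in order, correctness (the volume bound) and the running time $O(n + 1/\eps^{4.5})$. Since nearly all of the structural work has been front-loaded into \lemref{bound:fifteen:ext} and into the containment and complexity claims about $P$, the proof itself is mostly a matter of bookkeeping the two guarantees and checking that no slack is lost in the chain of Minkowski-sum containments.

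First I would invoke \lemref{bound:fifteen:ext} to obtain, in $O(n)$ time, the box $B = B^*(S)$ together with its two properties: $\Vol(\Bopt(S)) \leq \Vol(B) \leq 15\,\Vol(\Bopt(S))$, and a translate of $\frac{1}{107}B$ inscribed in $\CH(S)$. I would then form the grid $\G = \Grid(\frac{1}{2}B_\eps)$ from the scaled box $B_\eps$ (a centered copy of $\frac{\eps}{428}B$), snap each point of $S$ to the eight vertices of its containing cell to obtain $S_\G$, and set $P = \CH(S_\G)$. The decisive observation is the sandwich $\CH(S) \subseteq P \subseteq \SCHX = \CH(S) \oplus B_\eps$, read directly off the snapping construction; it simultaneously certifies that any bounding box of $P$ also bounds $S$ and that $P \subseteq B \oplus B_\eps$.

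For the approximation factor I would construct $\Bopt^\eps$, a translate of $\frac{\eps}{4}\Bopt(S)$ chosen to contain $B_\eps$, its existence guaranteed by the inscribed-cube clause of \lemref{bound:fifteen:ext}. Chaining the containments gives $P \subseteq \CH(S) \oplus B_\eps \subseteq \Bopt(S) \oplus \Bopt^\eps$, and since the right-hand side is itself a box of volume $(1+\eps/4)^3\Vol(\Bopt(S)) < (1+\eps)\Vol(\Bopt(S))$, the exact minimum-volume box $\Bopt(P)$ can only be smaller. Hence $\Bopt(P)$ is a genuine bounding box of $S$ meeting the desired bound, where the final inequality is exactly the point that uses $\eps \leq 1$.

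The running-time claim is where the real content lies, and I expect the complexity of $P$ to be the main obstacle. The box $B \oplus B_\eps$ contains only $k = O(1/\eps)$ grid points along each of its three directions, so $P$ is a lattice polytope inside a $k \times k \times k$ grid; Andrews' theorem~\cite{a-lbvsc-63} bounds its number of vertices by $O(k^{3/2}) = O(1/\eps^{3/2})$. This bound drives both the hull step, computable in $O(n + (1/\eps^2)\log(1/\eps))$ time from the extreme points along grid columns, and the final exact computation of $\Bopt(P)$ by O'Rourke's cubic algorithm~\cite{o-fmeb-85}, costing $O((1/\eps^{3/2})^3) = O(1/\eps^{4.5})$. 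Summing the four steps yields the stated $O(n + 1/\eps^{4.5})$ bound.
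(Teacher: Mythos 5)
Your proposal is correct and follows essentially the same route as the paper: the same invocation of \lemref{bound:fifteen:ext}, the same grid-snapping construction of $P = \CH(S_\G)$, the same containment chain $P \subseteq \CH(S) \oplus B_\eps \subseteq \Bopt(S) \oplus \Bopt^\eps$ yielding the $(1+\eps/4)^3 < 1+\eps$ volume bound, and the same complexity accounting via Andrews' $O(1/\eps^{3/2})$ bound on the lattice polytope and O'Rourke's cubic algorithm. No gaps; this matches the paper's argument step for step.
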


Note that the box $B (S)$ computed by the above algorithm is most likely
not minimal along its directions. The minimum bounding box of $S$ homothet
of $B (S)$ can be computed in additional $O (n)$ time.

\section{An Alternative Practical Algorithm}

\seclab{second-alg}

Unfortunately, the algorithm described in the previous section is
perhaps too difficult to implement.  In this section we suggest an
asymptotically less efficient, but easier to implement, approximation
algorithm for the minimum-volume bounding box problem.

In the algorithm described above we chose an approximation of the diameter of the set $S$ as a ``favorable'' direction $v$ and computed $\Bopt (S, \{v\})$, which served for the definition of the grid $\G$.  Then we expanded $\CH (S)$ (which was not computed explicitly) into a low-complexity grid polyhedron $P$, and computed $\Bopt (P)$ exactly.  We will next show that some grid point $v^* \in \G$ is itself favorable in the sense that $\Bopt (S, \{o v^*\})$ is a $(1 + \eps)$-approximation of $\Bopt (S)$.  Furthermore, the point $v^*$ is close enough to the origin of $\G$ so that we can perform an exhaustive search for this point.  For this purpose we will compute $\CH (S)$ explicitly and output $\Bopt (S, \{o v^*\})$ as the sought approximation.

\begin{figure*}
   \begin{center}
      \small
      \begin{program}
          \> {\large{\sc{Algorithm}}} \
          \Proc{\approxMVBB\ ($S, \eps$)} \\
          \> \> {\tt Input:} \> \> \> A set $S$ of $n$ points in $\Re^3$, and
          a parameter $0 < \eps \leq 1$. \\
          \> \> {\tt Output:} \> \> \>
          A $(1 + \eps)$-approximation of $\Bopt (S)$. \\
          \> \Procbegin \\
          \> \> Compute $\CH (S)$; \\
          \> \> Compute $B^* (S)$; \> \> \> \> \> \> \verb+/*+ The box generated by
          \lemref{bound:fifteen:ext} \verb+*/+ \\
          \> \> Compute $\BG = G (B^* (S), c / \eps)$; \verb+/*+ Refer to the text for the value of $c$
          \verb+*/+ \\
          \> \> Set $\verb+min_vol+ := \infty$ and $v^* := \verb+undefined+$; \\
          \> \> \For $v \in \BG$ \Do \\
          \> \> \> Compute $B = \Bopt (S, \{v\})$; \\
          \> \> \> \If $\verb+min_vol+ > \Vol (B)$ \Then \Do \\
          \> \> \> \> Set $\verb+min_vol+ := \Vol (B)$ and
          $v^* := v$; \\
          \> \> \> \Od \\
          \> \> \Endfor \\
          \> \> Return $\Bopt (S, \{v^*\})$; \\
          \> \Endproc{\approxMVBB} \\
      \end{program}
   \end{center}
   \vspace{-0.8cm}
   \caption{An exhaustive grid-based search algorithm for approximating
            $\Bopt (S)$}
   \figlab{aprox:mvbb}
\end{figure*}

The approximation algorithm is depicted in \figref{aprox:mvbb}.  In a nutshell, the algorithm computes $B^* (S)$, builds the corresponding grid $\G$, and computes the box $\Bopt (S, \{v\})$ for all the grid points $v \in \G$ close enough to $o$.  The overall running time of the algorithm is $O (n \log n + n / \eps^3)$.  We next prove the correctness of the algorithm and (implicitly) compute the constant $c$ which it uses.

Let $B^*(S)$ be, as before, the bounding box of $S$ computed by \lemref{bound:fifteen:ext}.  Define the grid $\G = \Grid (c \eps B^* (S))$.  The set of directions induced by $\BG$ (points computed by the algorithm) is a finite subset of $\G$.  Let $m = \max \pth{\ceil{4 / \eps}, 6}$.  Also, let $B^{1/m} (S)$ be a translation of $\frac{1}{m} \Bopt (S)$ centered at $o$, and let $\B = \Bopt (S) \oplus \Bopt^{1/m} (S)$.  We assume without loss of generality that $\B$ is axis parallel, its minimum in all axes is $o$, and that the longest edge of $\B$ is parallel to the $z$ axis.  Set $\delta = 1 / (10 m + 20)$, and define a second grid $\M = \Grid (\delta \B)$. (The latter grid plays a role only in proving the correctness of the algorithm but not in the algorithm itself.)  \lemref{bound:fifteen:ext} tells us that the constant $c$ may be chosen small enough so as to make the interior of every cell of $\M$ contain at least one grid point of $\G$.

Is it fairly easy to prove that:

\begin{lemma}
   \lemlab{project:rects}
   Let $R$ and $H$ be a rectangle and a plane, respectively, in $\Re^3$.  There exists a rectangle $R' \subset H$ that contains the orthogonal projection of $R$ into $H$, such that $\Area (R') \leq \Area (R)$.  \qed
\end{lemma}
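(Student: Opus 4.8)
The plan is to reduce the three-dimensional statement to a purely planar one and only then choose $R'$ cleverly. The orthogonal projection $p$ onto $H$, restricted to the plane $\Pi$ containing $R$, is a linear map $\Pi\to H$ whose two singular values are $1$ and $\lambda:=\cos\theta$, where $\theta$ is the dihedral angle between $\Pi$ and $H$. Passing to the singular-value bases, there are orthonormal coordinates on $\Pi$ and on $H$ in which $p$ acts as $D=\mathrm{diag}(1,\lambda)$ with $0\le\lambda\le1$. (If $\Pi\parallel H$ then $\lambda=1$ and $p$ is an isometry, so a congruent copy of $R$ works; if $\Pi\perp H$ then $\lambda=0$ and the projection degenerates to a segment, which is trivial.) Since orthonormal changes of coordinates preserve area, it suffices to prove the following planar claim: for a rectangle $R\subset\Re^2$ and $0<\lambda\le1$, the parallelogram $D(R)$ is contained in some rectangle of area at most $\Area(R)$.

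For the planar claim I would write $R$ with orthonormal edge directions rotated by an angle $\alpha$ and side lengths $s,t$, so that $D(R)$ is a parallelogram with edge vectors $A$ and $B$ satisfying $\Area(D(R))=\lambda\,\Area(R)$. I would then take $R'$ to be the rectangle that has one pair of sides parallel to the \emph{longer} edge, say $|A|\ge|B|$, and tightly encloses $D(R)$. Its extent along $\hat A=A/|A|$ is $|A|+|B\cdot\hat A|$ and its height is $\Area(D(R))/|A|$, whence $\Area(R')=\Area(D(R))\,(1+|A\cdot B|/|A|^2)$. Because $|A|\ge|B|$ gives $|A|^2\ge|A||B|$, it then suffices to establish $\lambda\,(1+|A\cdot B|/(|A||B|))\le 1$, i.e.\ that the cosine of the angle between the two projected edges is at most $(1-\lambda)/\lambda$.

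This last inequality is where the real content sits, and it is also where the naive approach fails: enclosing $D(R)$ by the box aligned with the coordinate (singular-vector) axes can have area as large as $2\,\Area(D(R))$ and does not suffice, so using the edge-flush orientation is essential. Substituting the explicit values $A\cdot B=st\,\sin\alpha\cos\alpha\,(\lambda^2-1)$, $|A|^2=s^2(\cos^2\alpha+\lambda^2\sin^2\alpha)$, and $|B|^2=t^2(\sin^2\alpha+\lambda^2\cos^2\alpha)$, the factors $s,t$ cancel, and after squaring the claim collapses to $\sin^2\alpha\cos^2\alpha\,(1+\lambda)^2(1-2\lambda)+\lambda^2\ge0$. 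Using $\sin^2\alpha\cos^2\alpha\le\tfrac14$ this is immediate for $\lambda\le\tfrac12$, and for $\lambda>\tfrac12$ it reduces to the one-variable inequality $2\lambda^3-\lambda^2-1\le0$ on $(0,1]$, which holds since the cubic vanishes at $\lambda=1$ and is increasing there.

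I expect the final elementary verification to be routine. The main obstacle is conceptual rather than computational: one must recognize that the axis-aligned enclosing box is inadequate, commit to the rectangle flush with the longer projected edge, and — crucially — dispose of the aspect ratio $s/t$ at the outset via $|A|^2\ge|A||B|$, so that what remains is a single clean one-variable estimate rather than a three-parameter mess.
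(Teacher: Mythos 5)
Your proof is correct, but there is no paper proof to compare it with: the authors state this lemma with the tombstone attached to the statement itself and only the remark that it ``[i]s fairly easy to prove,'' so your argument fills a gap rather than paralleling one. I checked your computation and it goes through: the projection restricted to the plane of $R$ does have singular values $1$ and $\lambda=\cos\theta$, and with $A=s(\cos\alpha,\lambda\sin\alpha)$, $B=t(-\sin\alpha,\lambda\cos\alpha)$ one gets $|A|^2|B|^2=s^2t^2\bigl(\sin^2\alpha\cos^2\alpha\,(1-\lambda^2)^2+\lambda^2\bigr)$, so the sufficient inequality $\lambda\bigl(1+|A\cdot B|/(|A||B|)\bigr)\le 1$ squares (both sides being nonnegative for $\lambda\le 1$) to exactly your inequality $\sin^2\alpha\cos^2\alpha\,(1+\lambda)^2(1-2\lambda)+\lambda^2\ge 0$; this is trivial for $\lambda\le\tfrac12$, and for $\lambda>\tfrac12$ the worst case $\sin^2\alpha\cos^2\alpha=\tfrac14$ reduces it to $2\lambda^3-\lambda^2-1\le 0$ on $(0,1]$, which holds. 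Your two structural observations are also the right ones and are precisely what the paper's ``fairly easy'' glosses over: the rectangle aligned with the singular directions genuinely fails (its area can exceed $\Area(R)$ as soon as $\lambda>\tfrac12$, and by an unbounded factor for skinny $R$), so one must take the rectangle flush with the longer projected edge, and the bound $|A\cdot B|/|A|^2\le |A\cdot B|/(|A||B|)$ is what eliminates the side lengths $s,t$ and leaves a two-variable, then one-variable, estimate. Two small polish items: for the cubic, rather than saying it ``vanishes at $\lambda=1$ and is increasing there,'' either note that its derivative $2\lambda(3\lambda-1)$ is positive on all of $(\tfrac12,1]$ (monotonicity on the whole interval is what you use), or simply factor $2\lambda^3-\lambda^2-1=(\lambda-1)(2\lambda^2+\lambda+1)$ and observe the quadratic factor is everywhere positive; and in the degenerate case $\lambda=0$ you should say explicitly that a segment is contained in a (possibly degenerate) rectangle of arbitrarily small area, so the conclusion holds there too.
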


Our next goal is to show that there exist two $\G$-grid points, where the
minimum-volume bounding box of $S$ perpendicular to the direction defined
by these points is a good approximation of $\Bopt (S)$. (Note that in the
grid-search algorithm the first point is actually $o$.)

\begin{lemma}
    Let $p$ and $q$ be points of $\G \cap B^\M_{(0,0,5)}$ and $\G \cap B^\M_{(0,0,20m+15)}$, respectively.  Then $\Bopt (S, \{pq\})$ is a $(1 + \eps)$-approximation of $\Bopt (S)$.
\end{lemma}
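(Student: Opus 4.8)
The plan is to prove the statement by exhibiting a single bounding box $B'$ of $S$ having $\overrightarrow{pq}$ among its edge directions and satisfying $\Vol(B') \le (1+\eps)\,\Vol(\Bopt(S))$; since $\Bopt(S,\{pq\})$ is by definition the minimum-volume such box, this yields $\Vol(\Bopt(S,\{pq\})) \le (1+\eps)\,\Vol(\Bopt(S))$, while the matching lower bound is immediate because $\Bopt(S)$ is the global minimum. Recall that by the normalization preceding the lemma $\B$ is axis-parallel with longest edge along the $z$-axis, and since $\B = \Bopt(S) \oplus B^{1/m}(S)$ is homothetic to $\Bopt(S)$, the optimal box itself is axis-parallel; write its edge lengths as $a,b,c$ with the $z$-edge $c$ the longest, so that $a,b \le c$, $\Vol(\Bopt(S)) = abc$, and $S \subseteq \Bopt(S)$.

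First I would quantify how close the direction $\overrightarrow{pq}$ is to the $z$-axis. Because $p \in B^\M_{(0,0,5)}$ and $q \in B^\M_{(0,0,20m+15)}$ share the footprint $(0,0)$ in the $x$- and $y$-indices of $\M = \Grid(\delta \B)$, their displacement spans at most one $\M$-cell in each horizontal direction, whereas their $z$-indices differ by at least $(20m+15)-6 = 20m+9$ cells. Since $\M$ is homothetic to $\Bopt(S)$, translating these cell counts into the edge lengths $a,b,c$ gives, for the unit vector $u=(u_x,u_y,u_z)$ along $\overrightarrow{pq}$, the slope bounds $|u_x/u_z| \le a/((20m+9)c)$ and $|u_y/u_z| \le b/((20m+9)c)$. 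The crucial point is that these slopes are proportional to $a/c$ and $b/c$, not merely to $1/(20m+9)$; this proportionality is what keeps the estimate valid even when $\Bopt(S)$ is arbitrarily elongated.

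Next I would assemble $B'$ from its width along $u$ and a bounding rectangle of the cross-section. The width is at most the extent of $\Bopt(S)$ along $u$, namely $a|u_x| + b|u_y| + c|u_z| \le c\pth{1 + 2/(20m+9)}$, using $a,b \le c$ and the slope bounds. For the cross-section, project $S \subseteq \Bopt(S)$ onto a plane $H$ orthogonal to $u$ and work in an orthonormal frame of $H$ close to $(\hat{x},\hat{y})$: the two base edges of $\Bopt(S)$ project into a rectangle of area at most $ab$, a bound certified by \lemref{project:rects} applied to the $a\times b$ face, while the projection of the $z$-edge is a segment that lengthens the two sides of this rectangle by only $c|u_x| \le a/(20m+9)$ and $c|u_y| \le b/(20m+9)$, again by the slope bounds. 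Hence the projection of $S$ fits in a rectangle of area at most $ab\pth{1 + 1/(20m+9)}^2$, which upper-bounds $\Area$ of its minimum-area enclosing rectangle.

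Multiplying the two estimates gives $\Vol(B') \le abc\pth{1 + 2/(20m+9)}\pth{1 + 1/(20m+9)}^2 = abc\,(1 + O(1/m))$, and since $m \ge \ceil{4/\eps}$ the correction factor is below $1+\eps$ with room to spare, which finishes the argument. I expect the main obstacle to be the cross-section bound: a careless treatment that merely bounds the projected $z$-edge by its length and grows the base rectangle on all sides would reintroduce the aspect ratio $\max(a,b)/\min(a,b)$ of the base and fail for thin boxes. The resolution is exactly the direction-by-direction accounting enabled by the slope-proportionality of the previous step, so that the tilted $z$-edge enlarges the cross-section in amounts proportional to $a$ and $b$ \emph{separately}; some care is also needed in selecting the frame in $H$ and in checking that the second-order tilt corrections are absorbed by the generous constants.
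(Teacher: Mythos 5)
Your proof is correct in substance, but it takes a genuinely different route from the paper's. The paper exploits the \emph{positions} of $p$ and $q$, not merely the direction they determine: it passes planes $H_p$, $H_q$ through $p$ and $q$ perpendicular to $pq$, uses the cell indices ($5$ and $20m+15$) together with the margin of $\Bopt(S)$ inside $\B$ to argue that these planes sandwich $\Bopt(S)$ and cross only the side faces of $\B$, erects the prism $\V$ over the projection $\A$ of $\Bopt(S)$ onto $H_p$ so that $\Bopt(S)\subseteq\V\subseteq\B$, and bounds $\A$ by applying \lemref{project:rects} to the \emph{full horizontal cross-section} $R=\B\cap H_p^{xy}$; the competitor box then runs from $H_p$ to $H_q$ over the resulting rectangle $R'$, giving $\Vol(\hat{B})=|pq|\,\Area(R')\le|b_1|\,\Area(R)\le\Vol(\B)\le(1+\eps)\Vol(\Bopt(S))$. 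You use only the direction $u$ of $pq$: you bound the $u$-extent of $\Bopt(S)$ and the projected cross-section by coordinate-wise slope estimates, and compare directly with $abc$, so the enlarged box $\B$ plays no role beyond defining the grids. What your route buys is a slightly stronger and more robust statement (\emph{any} direction with $|u_x/u_z|\le a/((20m+9)c)$ and $|u_y/u_z|\le b/((20m+9)c)$ works), and it sidesteps exactly the positional bookkeeping on which the paper spends its delicate constants ($\delta$, the five-cell margins, $|pq|\le|b_1|$); what the paper's route buys is that all anisotropy is absorbed in one stroke by \lemref{project:rects} applied to $R$, with no need to track how the tilted $z$-edge feeds into each side of the cross-section. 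The one soft spot in your write-up is the step you yourself flag: \lemref{project:rects} guarantees a rectangle of area at most $ab$ but gives \emph{no control over its orientation}, so you are not entitled to ``lengthen the two sides of this rectangle'' by $c|u_x|$ and $c|u_y|$---those are components in the frame of the projected coordinate axes, not in the unknown frame of the lemma's rectangle. The clean fix is to drop \lemref{project:rects} altogether and argue directly: the projected $a\times b$ face is a parallelogram with sides $a\pi(\hat{x})$ and $b\pi(\hat{y})$ whose inner product equals $-ab\,u_xu_y$, hence it is nearly rectangular and fits in a rectangle aligned with $\pi(\hat{x})$ of dimensions at most $a(1+O(1/m^2))\times b$, and in \emph{that} frame the projected $z$-edge does have components of magnitude at most roughly $c|u_x|$ and $c|u_y|$; all corrections then stay of order $1/m$ per side, which is comfortably below $1+\eps$ since $m\ge\ceil{4/\eps}$.
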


\begin{proof}
    Refer to \figref{proof}.
   \begin{figure*}
      \centerline{\includegraphics{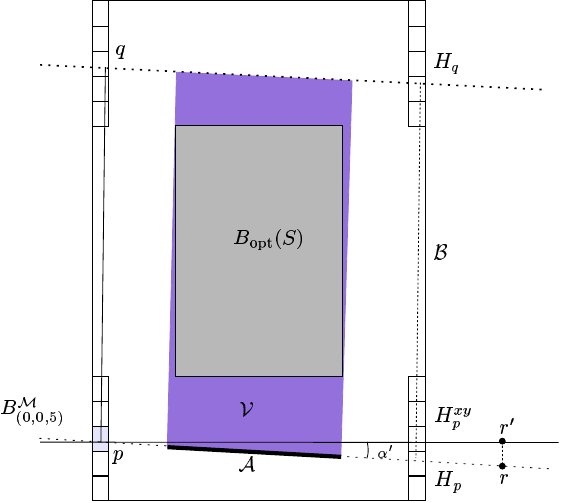}}
      \caption{The construction used in
         \thmref{aprox:ext}}
      \figlab{proof}
   \end{figure*}
   Denote by $H_p$ and $H_q$ the two planes that pass through $p$ and $q$,
   respectively, and are perpendicular to $pq$.
   Also denote the angle between $pq$ and the $z^+$ axis by $\alpha$.
   By construction, the direction $pq$ is almost vertical:
   A simple calculation shows that $\tan (\alpha) \leq \sqrt{2} / (10 m)$.
   Let $H_p^{xy}$ be the $xy$-parallel plane that passes through $p$,
   let $r \in H_p$ be an arbitrary point for which $|pr| \leq 2 |pq|$,
   and let $r'$ be the orthogonal projection of $r$ into $H_p^{xy}$.

   Then,
   $$
      |r r'| \leq |pr| \tan (\alpha)
             \leq \frac{2 \sqrt{2} |pq|}{10 m}
             \leq \frac{3 |pq|}{10 m}
             \leq \frac{3 |b_1|}{10 m}
             \leq \frac{4 |b_1|}{10 m + 20}
             =    4 \delta |b_1|,
   $$
   where we use the fact that $3 / (10 m) \leq 4 / (10 m + 20)$ if and
   only if $m \geq 6$, which is guaranteed.
   This implies that $H_p$ cannot intersect $\Bopt (S)$ since the
   vertical distance between $H_p$ and $H_p^{xy}$ (for such a point $r$)
   is less than $4 \delta |b_1|$ (four cells of $\M$),
   whereas the vertical distance between $H_p^{xy}$ and
   the bottom of $\Bopt (S)$ is at least $4 \delta |b_1|$.
   Similarly, the plane $H_q$ cannot intersect the box $\Bopt (S)$.
   The same argument shows that $H_p$ and $H_q$
   intersect only the vertical edges of
   $\B$ but do not intersect its top and bottom faces.

   Let $\A$ be the orthogonal projection of $\Bopt (S)$ into $H_p$
   (along $pq$), and let $\V$ be the product of $\A$ and $pq$
   (that is, a prism whose base is $\A$).
   It is easy to verify that $\Bopt (S) \subseteq \V \subseteq \B$.
   Obviously, the rectangle $R = \B \cap H_p^{xy}$ contains the set
   $\V \cap H_p^{xy}$.
   Moreover, according to \lemref{project:rects} the orthogonal
   projection of $R$ into $H_p$ is contained in a rectangle
   $R' \in H_p$ such that $\Area (R') \leq \Area (R)$.
   Since $\V \cap H_p^{xy} \subseteq R$, we also have $\A \subseteq R'$.
   Let $\hat{B}$ be the box with $R'$ as its base and the opposite face
   lying on $H_q$.
   Clearly, $S \subseteq \Bopt(S) \subseteq \V \subseteq \hat{B}$.
   Finally, we have
   $$
      \Vol(\hat{B}) =    |pq| \Area(R')
                    \leq |b_1| \Area(R)
                    \leq \Vol (\B)
                    \leq (1 + \eps) \, \Vol (\Bopt (S)).
   $$

   Since $\Vol (\hat{B}) \geq \Vol (\Bopt (S, \{pq\}))$, the latter box $\Bopt(S,\{pq\})$ is also a $(1 + \eps)$-approximation of $\Bopt (S)$.
\end{proof}

We complete this discussion by showing that $\vec{pq}$ is indeed ``short,'' namely, that $|pq|_\infty$ (measured by $\G$-grid units) is small.  Intuitively, this follows from the fact that the grid sizes of $\M$ and $\G$ are comparable up to a multiplicative constant.  By \lemref{obvious} we have that a copy of the unit box of $\M$ (that is, $\delta \B$), scaled down by a constant factor and translated, is contained by the unit box of $\G$ (that is, $c \eps B^{*}(S)$).  In particular, this implies that every grid box of $\M$ is covered by a constant number of grid boxes of $\G$.  In addition, the segment $pq$ is contained in $\cup_{i=5}^{20 m + 15} B^\M_{(0,0,i)}$, whose height (size along $z$) is $O (1 / \eps)$. Thus, the segment $pq$ can be covered by $O(1/\eps)$ grid boxes of $\G$. Hence, all the coordinates of $q-p$ (in $\G$ units) are $O (1 / \eps)$, where the constant of proportionality hidden in the big-Oh notation is $c$ (the constant used by the algorithm).  We thus establish the following theorem:

\begin{theorem}
   \thmlab{aprox:ext}

   Let $S$ be a set of $n$ points in $\Re^3$, and let $0 < \eps \leq 1$
   be a parameter.
   One can compute in $O (n \log n + n / \eps^3)$ time a bounding box
   $B (S)$ with $\Vol (B (S)) \leq (1 + \eps) \, \Vol (\Bopt (S))$.
\end{theorem}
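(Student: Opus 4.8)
The plan is to establish two claims about the algorithm \approxMVBB\ of \figref{aprox:mvbb}: that the box it returns has volume at most $(1+\eps)\Vol(\Bopt(S))$, and that it runs in $O(n\log n + n/\eps^3)$ time. Essentially all of the geometric content is already in hand: the lemma just proved exhibits a pair of grid points $p,q$ for which $\Bopt(S,\{pq\})$ is a $(1+\eps)$-approximation, and the paragraph following it shows that $q-p$, measured in $\G$-units, has all three coordinates bounded in absolute value by $c/\eps$. So the theorem is mainly a matter of gluing these facts to the code, together with a running-time accounting.

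For correctness, I would first note that $\Bopt(S,\{v\})$ depends only on the \emph{direction} of $v$, so the direction $\vec{pq}$ coincides with the direction $\vec{o\,(q-p)}$ tested by the loop when $v=q-p$. Because every $\G$-coordinate of $q-p$ is at most $c/\eps$ in magnitude, the point $q-p$ belongs to $\BG=G(B^*(S),c/\eps)$, once $c$ is taken to be the proportionality constant delivered by that discussion (this is precisely where the value of $c$ used by the algorithm is fixed; one must check the same $c$ can simultaneously be chosen small enough to make $\G$ fine enough for the lemma, as noted via \lemref{bound:fifteen:ext}). Hence the loop evaluates $\Bopt(S,\{q-p\})=\Bopt(S,\{pq\})$, and since it returns a box of minimum volume among all boxes it tests, the returned box has volume at most $\Vol(\Bopt(S,\{pq\}))\le(1+\eps)\Vol(\Bopt(S))$. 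It plainly encloses $S$, so correctness follows.

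For the running time I would account for the four stages separately. Computing $\CH(S)$ costs $O(n\log n)$; computing $B^*(S)$ costs $O(n)$ by \lemref{bound:fifteen:ext}; and enumerating $\BG$ costs $O(1/\eps^3)$, since it is the set of $(2\floor{c/\eps}+1)^3 = O(1/\eps^3)$ grid points within $L_\infty$-distance $c/\eps$ of $o$. The loop then makes one call to $\Bopt(S,\{v\})$ per point of $\BG$. Invoking the remark from \secref{first-alg} that a minimum-area enclosing rectangle is computable in linear time once a convex hull is available, I would argue that, with $\CH(S)$ computed once, each $\Bopt(S,\{v\})$ can be produced in $O(n)$ time: project orthogonally to $v$, read off the planar hull of the projection directly from $\CH(S)$, and compute $\Ropt$ of that projection. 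This gives $O(n)\cdot O(1/\eps^3)=O(n/\eps^3)$ for the loop, hence $O(n\log n + n/\eps^3)$ overall.

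The step I expect to be the main obstacle is exactly this per-direction $O(n)$ bound. A naive implementation recomputes, for each of the $O(1/\eps^3)$ directions, the planar convex hull of the projected points in $O(n\log n)$ time, yielding only the weaker bound $O((n\log n)/\eps^3)$. To shave the logarithmic factor one must reuse the single precomputed $\CH(S)$: for a fixed $v$ the boundary of the projected shadow is the image of the silhouette cycle of $\CH(S)$ (the edges whose two incident faces face opposite ways relative to $v$), and tracing that cycle produces the projected hull already in convex position in time linear in the hull complexity, after which rotating calipers yields $\Ropt$ in the same bound. Verifying this reuse carefully—and confirming the single constant $c$ meets both the fineness requirement and the covering requirement for $q-p$—is where the real care lies; the volume estimate itself is immediate from the preceding lemma.
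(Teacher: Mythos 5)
Your proposal is correct and follows essentially the same route as the paper: correctness comes from the preceding lemma together with the $O(1/\eps)$ bound on the $\G$-coordinates of $q-p$ (so the direction $\vec{pq}$ is among those tested by \approxMVBB), and the running time comes from computing $\CH(S)$ once in $O(n\log n)$ time and then answering each of the $O(1/\eps^3)$ calls to $\Bopt(S,\{v\})$ in $O(n)$ time using the precomputed hull. Your explicit silhouette-plus-rotating-calipers argument for the per-direction $O(n)$ bound, and your flagging of the consistency requirement on the constant $c$, merely spell out details the paper leaves implicit.
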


The algorithm described above may be too slow to use in practice because the constant of proportionality hidden in the big-Oh notation (affected by the value of $c$) may be too large.  However, it suggests the heuristic of computing the bounding boxes $\Bopt (S, \{v\})$ induced by directions defined by grid points of $\G$ ``close'' to $o$. \thmref{aprox:ext} implies that the higher the bound on the ``length'' of $v$ is, the better the approximation is.

\section{Experimental Results}

\seclab{experiments}

We have implemented software that computes the exact 2-dimensional
minimum-area bounding rectangle of a planar point set (by computing its
convex hull and then applying a rotating-calipers algorithm).
Based on this tool we have implemented software that computes
the exact 3-dimensional minimum-volume bounding box of a spatial point set,
one of whose directions is given.
We have used the latter tool for implementing several approximation heuristics
for the minimum-volume bounding box, and report here on several of them.
The entire software was implemented in plain C and it runs under any
Unix-like operating system. The running times reported here were
measured in a Linux environment on a 200-MHz Pentium-Pro machine.
The software consists of about 1,500 lines of code.

It was easy to observe that for any bounding box $B (S)$,
one can always ``locally'' improve (decrease) the
volume of the box by projecting it into a plane perpendicular to one of
the directions of $B (S)$, followed by computing the minimum-area
bounding rectangle of the projected set in that plane, and by using this
rectangle as the base of an improving bounding box of $S$.
Our experimental results revealed many examples in which this procedure
converges to a local (but not a global) minimum-volume bounding box.
Nevertheless, this procedure improves  (by a small amount) the solutions
produced by the approximation algorithms described in this paper.
We performed each experiment twice, without and with this
solution-improving step.

Here are three examples (out of the many examples we experimented with)
of the performance of the heuristics.
\figref{examples}(a.1) shows
a rotated version of the set
\[
    S = \{(-1, -0.1, 0), (-1, 0.1, 0), (1, 0, -0.1), (1, 0, 0.1)\}.
\]
(The points are displayed as small triangles.)  \figref{examples}(a.2) and~\figref{examples}(a.3) show $B^* (S)$ (which is also optimal among all boxes aligned with some diagonal of $\CH (S)$), and the improved $B^* (S)$-G(20) (see below), respectively.  \figref{examples}(b) shows a set $S$ of 48 arbitrary points.  The diameter of the set is shown as a nearly-vertical line segment.  The figure shows $B^* (S)$ (the nearly-vertical box) and the minimum-volume bounding box of $S$ among those aligned with some hull diagonal.  \figref{examples}(c) shows the same types of boxes bounding a set of 100 points which were randomly and uniformly selected on the unit sphere.

\begin{figure*}
   \centering
   \begin{tabular}{cc}
      \multicolumn{2}{c}{\includegraphics[width=2.0in]{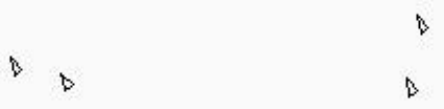}
         }
      \smallskip \\
      \multicolumn{2}{c}{(a.1) Points ($S$)} \\
     {\includegraphics[width=2.0in]{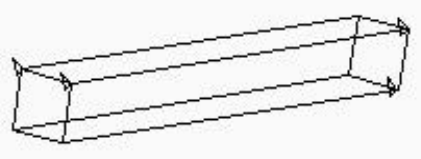}}
     &
       {\includegraphics[width=2.0in]{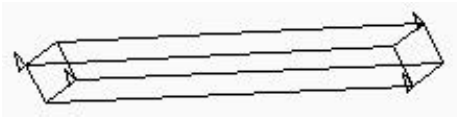}}%
     \\
      (a.2) $B^* (S)$ & (a.3) Improved $B^* (S)$-G(20) \smallskip \\
      \multicolumn{2}{c}{(a) 4 points} \medskip \\
      \includegraphics[width=2.1in]{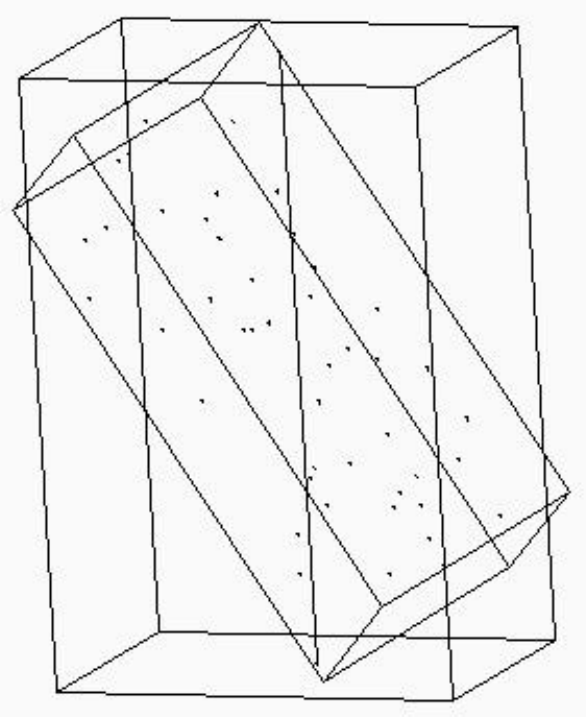} &
         \includegraphics[width=2.0in]{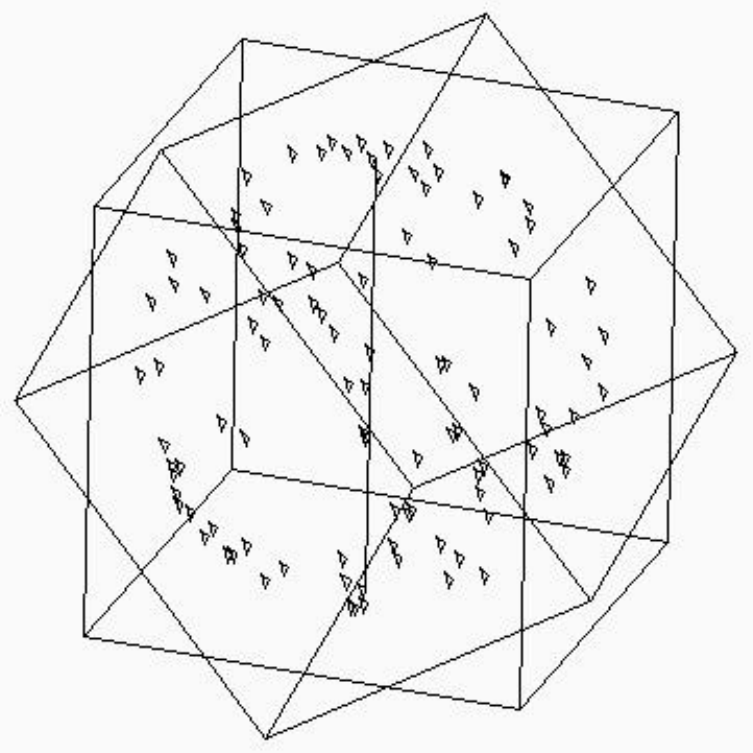} \\
      (b) 48 points & (c) 100 points
   \end{tabular}
   \caption{Bounding boxes of three spatial point sets}
   \figlab{examples}
\end{figure*}

\begin{table*}
   \centering
   \footnotesize
   \begin{tabular}{|c|l|l|r|r|c|c|}
      \hline
       & & & & \multicolumn{1}{c|}{Calls to} &
         \multicolumn{2}{c|}{Time} \\
      $|S|$ & Distribution & Box & \multicolumn{1}{c|}{Volume} &
         \verb+MVBB(v)+ & Per call & Total \\
      \hline
      ~~4 &           & $B^* (S)$ & 0.07980 & 1 & (unreliable) & 0 \\
          &           & All pairs & 0.07980 & 1 & (unreliable) & 0 \\
          &           & $B^* (S)$-G(2) & 0.03995 & 23 & (unreliable) & 0 \\
          &           & ~~~~(Improved) & 0.03995 & 3 & (unreliable) & 0 \\
          &           & $B^* (S)$-G(5) & 0.03995 & 339 & 118 $\mu$Sec &
                         0.04 Sec \\
          &           & ~~~~(Improved) & 0.03995 & 3 & (unreliable) & 0 \\
          &           & $B^* (S)$-G(10) & 0.03995 & 3107 & 113 $\mu$Sec &
                         0.35 Sec \\
          &           & ~~~~(Improved) & 0.03995 & 3 & (unreliable) & 0 \\
          &           & $B^* (S)$-G(20) & 0.03995 & 26019 & 113 $\mu$Sec &
                         2.95 Sec \\
          &           & ~~~~(Improved) & 0.03995 & 3 & (unreliable) & 0 \\
          &           & $xyz$-G(2) & 0.07974 & 23 & (unreliable) & 0 \\
          &           & ~~~~(Improved) & 0.05267 & 15 & (unreliable) & 0 \\
          &           & $xyz$-G(5) & 0.05674 & 339 & 118 $\mu$Sec &
                         0.04 Sec \\
          &           & ~~~~(Improved) & 0.04202 & 27 & (unreliable) &
                         0.01 Sec \\
          &           & $xyz$-G(10) & 0.05005 & 3107 & 119 $\mu$Sec &
                         0.37 Sec \\
          &           & ~~~~(Improved) & 0.04009 & 9 & (unreliable) & 0 \\
          &           & $xyz$-G(20) & 0.04082 & 26019 & 119 $\mu$Sec &
                         3.10 Sec \\
          &           & ~~~~(Improved) & 0.04082 & 3 & (unreliable) & 0 \\
      \hline
      ~48 & Arbitrary & $B^* (S)$ & 168.82 & 1 & (unreliable) & 0 \\
          &           & All pairs & 83.20 & 1,128 &
                            674 $\mu$Sec & 0.76 Sec \\
          &           & $B^* (S)$-G(2) & 87.11 & 23 &
                            (unreliable) & 0.02 Sec \\
          &           & ~~~~(Improved) & 83.24 & 18 &
                            (unreliable) & 0.01 \\
          &           & $B^* (S)$-G(5) & 84.13 & 339 &
                            678 $\mu$Sec & 0.23 Sec \\
          &           & ~~~~(Improved) & 83.39 & 12 &
                            (unreliable) & 0 \\
          &           & $B^* (S)$-G(10) & 83.28 & 3,107 &
                            679 $\mu$Sec & 2.11 Sec \\
          &           & ~~~~(Improved) & 83.18 & 9 &
                            (unreliable) & 0 \\
          &           & $B^* (S)$-G(20) & 83.28 & 26,019 &
                            677 $\mu$Sec & 17.61 Sec \\
          &           & ~~~~(Improved) & 83.18 & 9 &
                            (unreliable) & 0.01 Sec \\
          &           & $xyz$-G(2) & 83.22 & 23 &
                            (unreliable) & 0.02 Sec \\
          &           & ~~~~(Improved) & 83.20 & 6 &
                            (unreliable) & 0 \\
          &           & $xyz$-G(5) & 83.22 & 339 &
                            678 $\mu$Sec & 0.23 Sec \\
          &           & ~~~~(Improved) & 83.20 & 6 &
                            (unreliable) & 0 \\
          &           & $xyz$-G(10) & 83.22 & 3,107 &
                            653 $\mu$Sec & 0.23 Sec \\
          &           & ~~~~(Improved) & 83.20 & 6 &
                            (unreliable) & 0.01 Sec \\
          &           & $xyz$-G(20) & 83.22 & 26,019 &
                            658 $\mu$Sec & 17.13 Sec \\
          &           & ~~~~(Improved) & 83.11 & 6 &
                            (unreliable) & 0 \\
      \hline
      100 & Uniform   & $B^*$ & 7.333 & 1 &
                            (unreliable) & 0 \\
          & on a unit & All pairs & 6.422 & 4,950 &
                            1,596 $\mu$Sec & 7.99 Sec \\
          & sphere    & $B^* (S)$-G(2) & 6.688 & 23 &
                            (unreliable) & 0.04 Sec \\
          &           & ~~~~(Improved) & 6.601 & 15 &
                            (unreliable) & 0.02 \\
          &           & $B^* (S)$-G(5) & 6.446 & 339 &
                            1,622 $\mu$Sec & 0.55 Sec \\
          &           & ~~~~(Improved) & 6.420 & 18 &
                            (unreliable) & 0.03 Sec \\
          &           & $B^* (S)$-G(10) & 6.427 & 3,107 &
                            1,641 $\mu$Sec & 5.10 Sec \\
          &           & ~~~~(Improved) & 6.418 & 9 &
                            (unreliable) & 0.02 Sec \\
          &           & $B^* (S)$-G(20) & 6.421 & 26,019 &
                            1,625 $\mu$Sec & 42.27 Sec \\
          &           & ~~~~(Improved) & 6.421 & 3 &
                            (unreliable) & 0 \\
          &           & $xyz$-G(2) & 6.719 & 23 &
                            (unreliable) & 0.04 Sec \\
          &           & ~~~~(Improved) & 6.526 & 21 &
                            (unreliable) & 0.03 Sec \\
          &           & $xyz$-G(5) & 6.462 & 339 &
                            1,622 $\mu$Sec & 0.55 Sec \\
          &           & ~~~~(Improved) & 6.418 & 12 &
                            (unreliable) & 0.02 Sec \\
          &           & $xyz$-G(10) & 6.440 & 3,107 &
                            1,629 $\mu$Sec & 5.06 Sec \\
          &           & ~~~~(Improved) & 6.422 & 12 &
                            (unreliable) & 0.02 Sec \\
          &           & $xyz$-G(20) & 6.426 & 26,019 &
                            1,638 $\mu$Sec & 42.63 Sec \\
          &           & ~~~~(Improved) & 6.419 & 9 &
                            (unreliable) & 0.01 Sec \\
      \hline
   \end{tabular}
   \caption{Performance of the approximation heuristics}
   \tbllab{performance}
\end{table*}

\tblref{performance} shows the volumes of several bounding boxes of the three spatial sets, and the corresponding running times of our software.  The box $B^* (S)$ is the minimum-volume bounding box aligned with the diameter of the set $S$. The ``all-pairs'' box is obtained by minimizing the volume of all the boxes aligned with directions which connect some two points of $S$. The suffix ``-G($k$)'' stands for checking all the boxes aligned with directions obtained by connecting the origin $o$ with a grid point whose $L_\infty$ norm is at most $k$. The table reports results for $\Grid (B^* (S))$ and for the regular Cartesian grid.  The column entitled ``\verb+MVBB(v)+'' details the number of calls to the function that computes $\Bopt (S, \{v\})$.

Note that although the more laborious heuristics require much time to run,
some of the faster heuristics perform reasonably well in practice.
For example, the $B^* (S)$-G(5) variant combined with the improvement step
produces good approximating boxes and runs fast.
Moreover, this heuristic performs considerably better than the uniform-grid
heuristic for ``long and skinny'' sets of points.

\section{Conclusion}
\seclab{conclusion}

In this paper we present an efficient algorithm for approximating the
minimum-volume bounding box of a point set in $\Re^3$.
We also present a simpler algorithm which we implemented and
experimented with on numerous three-dimensional point sets.

Jeff Erickson pointed out in a personal communication that it was possible to reduce the $O (1 / \eps^{4.5})$ term in the running time of the first algorithm to $O (1 / \eps^3)$ (in the cost of adding an $O (n \log{n})$ term) by using Dudley's method~\cite{d-mescs-74,ahsv-aspcp-97}. The main idea is to scale down the space so as to transform $B^* (S)$ to a unit cube.  There one computes a $(c \eps)$-approximation (for a suitable constant $c$) of $\CH (S)$ and scales up this approximating polyhedron back to the original space.  Similarly to our argumentation in \secref{first-alg}, one can show that the minimum-volume bounding box of the scaled-up polyhedron is a $(1 + \eps)$-approximation of the minimum-volume bounding box of $S$. This version of the algorithm requires $O (n \log{n} + 1 / \eps^3)$ time.

The implementation of the algorithm described in this paper
is available online \cite{h-scpca-00}.  We conclude by
mentioning two open problems:

\begin{itemize}
\item Can one maintain dynamically a $(1 + \eps)$-approximation of the
      minimum-volume bounding box of a moving point set in $\Re^3$?
\item Can one compute efficiently and by a simple algorithm a
      $(1 + \eps)$-approximation of the minimum-volume bounding ellipsoid of a
      point set in $\Re^3$?
\end{itemize}

\section*{Acknowledgement}

The authors wish to thank Pankaj Agarwal, Jeff Erickson, and Micha
Sharir for helpful discussions concerning the problem studied in this
paper and related problems.
This work is part of the second author's Ph.D. thesis, prepared at
Tel-Aviv University under the supervision of Prof.\ Sharir.

\BibLatexMode{\printbibliography}

\end{document}